\documentclass[aps,prd,twocolumn,superscriptaddress]{revtex4-2}

\usepackage{makecell} 
\usepackage{microtype}
\usepackage{graphicx}
\usepackage{dcolumn}
\usepackage{bm}
\usepackage{caption} 
\usepackage{tensor} 
\usepackage[hidelinks]{hyperref} 
\usepackage{amssymb}
\usepackage{subcaption}
\usepackage{soul}
\usepackage{stmaryrd}
\usepackage{lmodern}
\usepackage{amsmath,amssymb,amsfonts,mathtools,bm}
\usepackage{mathrsfs}
\usepackage{amsthm}
\usepackage{booktabs,array}
\usepackage{enumitem}
\usepackage{xcolor}
\usepackage[nameinlink,noabbrev]{cleveref}
\usepackage{listings}

\usepackage[toc,page]{appendix}

\newtheorem{theorem}{Theorem}[section]
\newtheorem{proposition}[theorem]{Proposition}
\newtheorem{lemma}[theorem]{Lemma}
\newtheorem{corollary}[theorem]{Corollary}
\theoremstyle{definition}

\theoremstyle{remark}
\newtheorem{remark}[theorem]{Remark}

\newcommand{\dd}{\mathrm d}
\newcommand{\ii}{\mathrm i}
\newcommand{\e}{\mathrm e}
\newcommand{\Om}{\Omega}
\newcommand{\Sig}{\Sigma}
\newcommand{\PP}{\mathcal P}
\newcommand{\QQ}{\mathcal Q}
\newcommand{\RR}{\mathcal R}
\newcommand{\Base}{\mathcal B}
\newcommand{\Target}{\mathcal T}
\newcommand{\Ham}{\mathscr H}

\newcommand{\Rea}{\operatorname{Re}}

\newcommand{\Tr}{\operatorname{Tr}}
\newcommand{\bars}[1]{\overline{#1}}
\newcommand{\epss}{\epsilon_0}
\newcommand{\alphao}{\alpha_0}
\newcommand{\acc}{\mathfrak a}

\newcommand{\Bal}{\mathfrak B_{+}}

\begin{document}

\title{Potential-space geometry and scalar rigidity of the Plebański-Demiański ansatz without a cosmological constant}

\author{Leonel Bixano}
    \email{Contact author: leonel.delacruz@cinvestav.mx}
\author{Tonatiuh Matos}%
 \email{Contact author: tonatiuh.matos@cinvestav.mx}
\affiliation{Departamento de F\'{\i}sica, Centro de Investigaci\'on y de Estudios Avanzados del Instituto Politécnico Nacional, Av. Instituto Politécnico Nacional 2508, San Pedro Zacatenco, M\'exico 07360, CDMX.
}%

\date{\today}

\begin{abstract}
    We analyze stationary, axisymmetric Einstein–Maxwell–dilaton theory with zero cosmological constant in the conformal Carter class, employing a generalized Ernst approach on the space of potentials. We begin by establishing a complete local correspondence between Plebański–Demiański principal coordinates and Weyl data, covering the electric, magnetic, and twist potentials as well as the 1-forms \(A\), \(B\), and \(C\). In the source-free Weyl reduction, requiring the invariant area density to be harmonic singles out the \(\Lambda=0\) sector, and the principal chiral splitting isolates both electromagnetic misalignment and the part due to acceleration. Next, we obtain the Einstein reconstruction equations for general local EMD configurations in this setting. In a physical Carter domain with an ordinary scalar, the bilinear factor \(\Om=1-\acc pq\) enforces \(C=0\) and \(B_-=0\). With fixed \(\alphao\neq0\), the scalar field equation further rules out the charged PD Coulomb-type 1-form. Lastly, assuming an exact quadratic Stäckel scalar 1-form, the conditions of exactness together with reconstruction restrict the system to a single aligned branch, but consistency with the full potential equations removes this branch for arbitrary smooth Carter structure functions. Consequently, within the quadratic Stäckel class studied here, there is no local nonconstant solution with an ordinary dilaton. This conclusion does not extend to phantom scalars, conformal factors of higher degree, or non-Stäckel 1-forms.
\end{abstract}

\maketitle

\section{Introduction}

The Plebański–Demiański (PD) family is the canonical type-D solution of the Einstein–Maxwell equations with a non-null electromagnetic field aligned with the repeated principal null directions of the Weyl tensor \cite{Plebanski:1976gy,Griffiths:2005qp}. Its separability properties, quartic structure functions, acceleration parameter, and conformal Killing–Yano structure are well established \cite{Carter:1968ks,Kubiznak:2007kh}. More recently, Ovcharenko and Podolský constructed a broad rotating type-D electrovacuum class with a fully non-aligned electromagnetic field \cite{Ovcharenko:2025cpm}. Its local conformal-to-Carter characterization has since been sharpened \cite{Ovcharenko:2026pow,Nakajima:2026lkr}, and a new PD-adapted parametrization has clarified the full parameter space and its aligned limit \cite{Furugori:2026qbb}. Both the aligned and non-aligned sectors possess a non-degenerate conformal Killing–Yano 2-form, although in the non-aligned case it does not generate the full symmetry tower \cite{Gray:2025lwy}.

Weyl-coordinate descriptions are already known for important accelerating type-D subfamilies, notably the spinning \(C\)-metric \cite{Bicak:1999sa}, and the dyonic PD seed has been treated through the standard Ernst formalism \cite{Astorino:2023elf}. We therefore claim neither the use of Weyl variables nor the existence of an Ernst representation as new. Our contribution is the closed local reconstruction of the five real potentials and the \((A,B,C)\) target-space coframe, together with the Weyl integrability conditions, Hamiltonian reconstruction, principal-chirality interpretation, and a sequence of local rigidity results culminating in a full-system no-go theorem for the quadratic Stäckel scalar class.

The potential-based formulation of stationary and axisymmetric gravitational configurations originates from Ernst’s groundbreaking work, in which he reformulated the vacuum and Einstein–Maxwell field equations using complex potentials \cite{Ernst:1967wx,Ernst:1967by}. This framework was later endowed with a geometric interpretation via the theory of harmonic maps. Specifically, Matos and Plebański demonstrated that the stationary and axisymmetric vacuum Einstein equations can be cast as a harmonic map from the reduced space-time into a suitable potential space \cite{Matos:1994hm}. The harmonic-map methodology was further extended to five-dimensional stationary and axisymmetric gravity, where the corresponding reduced equations were analyzed through the geometry of the associated target manifold \cite{Matos:1994qv}. Matos et al subsequently generalized this program to the Einstein–Maxwell–dilaton and Einstein–Maxwell–phantom setting \cite{Matos:2000za,Matos:2010pcd,Matos:2000ai}.

The formulation employed in the present article is the generalized Ernst-like framework developed in Refs.~\cite{Bixano:2026xum,Bixano:2026ouq,Bixano:2026duf}. In Ref.~\cite{Bixano:2026xum}, the stationary axisymmetric Einstein-Maxwell-Scalar gield system was reformulated in terms of the five real potentials \(Y^{A}=(f,\epsilon,\psi,\chi,\kappa)\), the corresponding five-dimensional target-space metric was obtained for arbitrary scalar field coupling, and the field equations were written in terms of the three differential 1-forms \(A\), \(B\), and \(C\). This construction was extended in Ref.~\cite{Bixano:2026ouq} to broader Einstein-ModMax-Scalar theories. The geometric and algebraic structure of the potential space was subsequently analyzed in Ref.~\cite{Bixano:2026duf}, where its visible, hidden, sectorial, and discrete symmetries were classified, and the harmonic one-potential reduction was identified with an affinely parametrized geodesic flow governed by a finite-dimensional Hamiltonian system. Section~\ref{sec:potential} summarizes the elements of this framework required below and fixes our conventions before specializing to the Plebański-Demiański geometry. The original contribution of the present work begins with the complete translation of the Plebański-Demiański family into this potential-space language and uses the same \(A,B,C\) field equations and Hamiltonian tensor to investigate its possible dilatonic extensions and the associated rigidity obstructions.

%

The results presented here supplement, but do not duplicate, existing no-go theorems. Kocherlakota and Narayan exclude a minimally coupled massless real scalar in a different asymptotically flat, doubly separable metric class \cite{Kocherlakota:2025cwq}. Conversely, stationary axisymmetric Einstein–Maxwell–dilaton–axion solutions with a Killing tensor exist in the broader Benenti–Francaviglia class, which is generically of Petrov type I \cite{Galtsov:2025nia}. Our bilinear theorem assumes neither a polynomial form for the Carter functions nor electromagnetic alignment. The stronger full-system no-go result instead concerns the explicitly stated polynomial conformal factor and exact quadratic Stäckel scalar 1-form in a physical Carter region.

The article is organized into three numbered sections. Section~\ref{sec:potential} presents the Ernst-like potential theory and fixes all conventions used later. Section~\ref{sec:PDdictionary} constructs the full PD dictionary and then interprets type D, Maxwell alignment, acceleration, conical data, and hidden symmetry directly in potential space. Section~\ref{sec:rigidityMain} develops the Carter reconstruction identities together with the conditional no-go and rigidity theorems. More extensive computations in base calculus, Routh reduction, the Newman-Penrose formalism, and coefficient verification are deferred to the appendices, so that the main exposition remains readable without compromising reproducibility.

\section{The Ernst-like potential theory}
\label{sec:potential}

This section summarizes the part of the framework developed in Refs.~\cite{Bixano:2026uqk,Bixano:2026ouq} that will be needed in the remainder of the article.  Consider a smooth four-dimensional spacetime $\mathcal M$ equipped with a Lorentzian metric $g$, a real dilatonic scalar field $\phi$, and an electromagnetic four-potential $\mathcal A$ with associated field strength $F=\dd\mathcal A$.  We adopt the signature $(-,+,+,+)$ and work in the Einstein frame, with action
\begin{equation}\label{eq:action}
     S=\int \dd^4x\sqrt{-g}\left[
     R-2\epss(\nabla\phi)^2-\e^{-2\alphao\phi}F_{\mu\nu}F^{\mu\nu}
     \right],
\end{equation}
where $R$ is the scalar curvature of $g$, $(\nabla\phi)^2=g^{\mu\nu}\partial_\mu\phi\partial_\nu\phi$, $\alphao$ is the fixed dilatonic coupling, and $\epss=+1$ denotes the ordinary dilatonic scalar field while $\epss=-1$ denotes its phantom continuation. We set
\begin{equation} \label{eq:kappaC}
     \kappa:=\e^{-\alphao\phi},
     \qquad
     C:=-\dd\ln\kappa=\alphao\,\dd\phi.
\end{equation}
There is no cosmological constant and no scalar potential.

\subsection{Spacetime, orbit space, and real potentials}

Three geometries enter the reduction and will never be identified with one another:
{\small
\begin{equation*}
     (\mathcal M,g,F,\phi)\longrightarrow
     (\Base,\gamma;\rho)\overset{Y}{\longrightarrow}(\Target,G),
     \qquad \gamma=\dd\rho^2+\dd z^2.
\end{equation*}
}
Here $\Base:=\mathcal M/(\mathbb R\times U(1))$ is the two-dimensional orbit space of the stationary and axial isometries on a regular circular patch, called the \emph{Weyl base}. The tensor $\gamma$ is the flat metric in its conformal class in Weyl coordinates $(\rho,z)$, where $\rho:=\sqrt{-\det(g_{ab})}$, $a,b\in\{t,\varphi\}$, is the invariant area density of the Killing orbits. The map $Y$ collects the five real potentials and maps $\Base$ into the five-dimensional target manifold $\Target$, with target metric $G$ defined below. Thus $\Base$ is two-dimensional, while $\Target$ is a five-dimensional potential space.

The spacetime Hodge operator is denoted by ${}^{\star}$ and is fixed by the orientation $\eta_{t\rho z\varphi}=+\sqrt{-g}$ through $({}^{\star}F)_{\mu\nu}=\tfrac12\eta_{\mu\nu}{}^{\alpha\beta}F_{\alpha\beta}$, with signature $(-,+,+,+)$ it obeys ${}^{\star}{}^{\star}F=-F$ on two-forms.  The symbol $\star_{\Base}$ acts only on 1-forms of the Weyl base and is fixed by $\star_{\Base}\dd\rho=\dd z$, $\star_{\Base}\dd z=-\dd\rho$ and the orientation $\dd\rho\wedge\dd z>0$.  The symbol $\wedge$ always denotes the exterior product on the manifold on which the displayed forms live.  The operator $\star_{\rm PD}$ introduced in Section~\ref{sec:PDdictionary} is the physical base Hodge operator transported to the principal chart $(p,q)$. The symbol $\star_0$, introduced still later, is an auxiliary principal complex structure.  Complex conjugation of $A$ and $B$ is unrelated to every Hodge operation.  Finally, the Weyl metric function $k$ is distinct from the structural constant $k_{\rm PD}$ in the PD quartics.

Let $\xi_{(t)}=\partial_t$ denote the stationary Killing vector and $\xi_{(\varphi)}=\partial_\varphi$ the axial Killing vector, where $\varphi$ is the axial coordinate. Under the circularity condition, the metric can be written in the Weyl–Lewis–Papapetrou form
{\small
\begin{equation}\label{eq:WeylMetric}
     \dd s^2=-f(\dd t-\omega\dd\varphi)^2
     +f^{-1}\left[\e^{2k}(\dd\rho^2+\dd z^2)+\rho^2\dd\varphi^2\right],
\end{equation}
}
here $f=-g(\xi_{(t)},\xi_{(t)})>0$ holds on the stationary region, $\omega$ denotes the rotational function that appears in the Killing sector, and $k$ is the conformal factor of the orbit metric.  Each of these three functions depends solely on $(\rho,z)$.  In a gauge that preserves the symmetries, the electromagnetic four-potential takes the form $\mathcal A=A_t(\rho,z)\dd t+A_\varphi(\rho,z)\dd\varphi$.

The real potentials $(f,\epsilon,\psi,\chi,\kappa)$ are defined by (See Ref.~\cite{Matos:1994hm,Matos:2000ai,Matos:2000za,Matos:2010pcd,Bixano:2026xum,Bixano:2026ouq,Bixano:2026duf})
\begin{equation} \label{eq:potdefinitions}
\begin{aligned}
    &\psi=2A_t,
     \quad
     \dd\chi=\kappa^2\star_{\Base}
     \left[\frac{2f}{\rho}(\omega\dd A_t+\dd A_\varphi)\right],
     \\
     &\Theta:=\dd\epsilon-\psi\dd\chi
     =\star_{\Base}\left(\frac{f^2}{\rho}\dd\omega\right).
\end{aligned}
\end{equation}
The scalar $\psi$ is the electric potential in the normalization of Refs.~\cite{Bixano:2026xum,Bixano:2026ouq}. $\chi$ is the magnetic potential obtained by dualizing the axial Maxwell equation. $\epsilon$ is the twist potential obtained by dualizing the mixed Einstein equation, and $\Theta$ is the resulting real twist 1-form. These potentials are local and are defined up to additive constants on a simply connected patch.  Their defining relation implies the identity $\dd\Theta=-\dd\psi\wedge\dd\chi$.

The potential map before mentioned is therefore
\[
Y:\Base\longrightarrow\Target,\qquad Y^A=(f,\epsilon,\psi,\chi,\kappa).
\]
The three 1-forms in the Matos-Bixano (MB) coframe are
\begin{equation} \label{eq:ABC}
\begin{aligned}
     &A=\frac{1}{2f}(\dd f-\ii\Theta),\\
     &B=-\frac{1}{2\sqrt f}\left(\kappa\dd\psi-\frac{\ii}{\kappa}\dd\chi\right),
     \\ &C=-\dd\ln\kappa .
\end{aligned}
\end{equation}
The basic dictionary, valid before any PD specialization, is
{\footnotesize
\begin{equation}
 \begin{gathered}
     \dd f=f(A+\bars A),\qquad
     \Theta=\ii f(A-\bars A),\qquad
     \dd\kappa=-\kappa C,\\
     \dd\psi=-\frac{\sqrt f}{\kappa}(B+\bars B),\quad
     \dd\chi=-\ii\sqrt f\,\kappa(B-\bars B),\quad
     \dd\epsilon=\Theta+\psi\dd\chi,\\
     \dd\omega=-\frac{\rho}{f^2}\star_{\Base}\Theta,\qquad
     \dd A_\varphi=-\frac\omega2\dd\psi
     -\frac{\rho}{2f\kappa^2}\star_{\Base}\dd\chi.
 \end{gathered}
 \label{eq:basicDictionary}
\end{equation}
}
Thus $(A,B,C)$ retain all five target differentials and reconstruct the two cyclic spacetime functions. They are the pullbacks of a complex coframe on the target space with metric
{\footnotesize
\begin{equation}
\begin{aligned}
     \dd s^2_{\Target}
     =\frac{\dd f^2+(\dd\epsilon-\psi\dd\chi)^2}{2f^2}
     -\frac{1}{2f}\left(\kappa^2\dd\psi^2+\kappa^{-2}\dd\chi^2\right) 
     +\frac{2\epss}{\alphao^2}\frac{(\dd\kappa)^2}{\kappa^2}.
\end{aligned}
 \label{eq:targetmetric}
\end{equation}
}
Hence, $G$ is the pseudo-Riemannian metric whose line element is given by \eqref{eq:targetmetric}, for $\epss=+1$ it has signature $(3,2)$ in the order indicated. 

\subsection{Why the 1-forms \texorpdfstring{$(A,B,C)$}{(A,B,C)} have this form}

The combinations in \eqref{eq:ABC} are a complex semi-null coframe adapted to the metric of the potential space derived in \cite{Bixano:2026xum}. To see this directly, use the symmetric product of 1-forms. Since $\Theta$ and $\dd f$ are real,
{\footnotesize
\begin{equation}\label{eq:ABCcoframeproof}
     2A\bars A=\frac{\dd f^2+\Theta^2}{2f^2},
     \quad
     2B\bars B=\frac{\kappa^2\dd\psi^2+\kappa^{-2}\dd\chi^2}{2f},
     \quad
     C^2=\frac{(\dd\kappa)^2}{\kappa^2}.
\end{equation}
}
Therefore
\begin{equation}
 \dd s^2_{\Target}=2A\bars A-2B\bars B
 +\frac{2\epss}{\alphao^2}C^2.
 \label{eq:targetABC}
\end{equation}
Consequently, $A$ is determined by the gravitational and twist block, $B$ by the electric and magnetic block with exactly the dilatonic weights $(\kappa,\kappa^{-1})$, and $C$ by the logarithmic scalar field. 

By definition,
\begin{equation}
     \dd\Theta=-\dd\psi\wedge\dd\chi.
     \label{eq:HeisenbergCurvature}
\end{equation}
This relation holds irrespective of the field equations and implies that the gravito-rotational potential is fibered over the electromagnetic plane. It is the differential source of the mixed term in $A$ and underlies the manifest Heisenberg-type target-space symmetries analyzed in \cite{Bixano:2026duf}.

\subsection{Compact field equations and two distinct constant-scalar limits}

For 1-forms on $\Base$, we write $X\cdot Y:=\gamma^{ij}X_iY_j$ for the complex-bilinear contraction, this pairing is not Hermitian, and complex conjugation is always indicated explicitly. For a scalar function $u$, $Du$ denotes its $\gamma$-gradient, and for $X=X_\rho\dd\rho+X_z\dd z$ we define the weighted divergence by $D(\rho X):=\partial_\rho(\rho X_\rho)+\partial_z(\rho X_z)$. The field equations obtained in Ref.~\cite{Bixano:2026xum} are
\begin{subequations}\label{eq:FieldEquations}
\begin{align}
     \frac1\rho D(\rho A)
     &=B\cdot\bars B+A\cdot(A-\bars A),
     \label{eq:Aeq}\\
     \frac1\rho D(\rho B)
     &=-\frac12B\cdot(A-3\bars A)+C\cdot\bars B,
     \label{eq:Beq}\\
     \frac1\rho D(\rho C)
     &=\frac{\alphao^2}{2\epss}\left(B\cdot B+\bars B\cdot\bars B\right).
     \label{eq:Ceq}
\end{align}
\end{subequations}
Equivalently, $D(\rho X)=\star_{\Base}\dd(\rho\star_{\Base}X)$. 

There are two separate types of limits that should not be confused. Pure Einstein–Maxwell theory is recovered by eliminating the dynamical dilaton: concretely, one sets $\kappa = 1$, $C = 0$, and decouples the scalar by fixing \(\alphao = 0\), keeping only the $A$ and $B$ equations. In contrast, for fixed $\alphao \neq 0$, the condition $C = 0$ corresponds to a particular field configuration within EMSF, and the $C$ equation still applies. It then enforces
\begin{equation}
 B\cdot B+\bars B\cdot\bars B=0,
 \quad\hbox{equivalently}\quad \Rea(B\cdot B)=0.
 \label{eq:constantDilatonConstraint}
\end{equation}

\subsection{Exterior form of the field equations}

For later substitutions in the $(p,q)$ chart, it is crucial to specify the two-dimensional operations in an unambiguous way. On the Weyl base we choose the orientation so that $\dd\rho\wedge\dd z$ is positive, and we define $\star_{\Base}\dd\rho=\dd z$, $\star_{\Base}\dd z=-\dd\rho$. If $X=X_\rho\dd\rho+X_z\dd z$ and $Y=Y_\rho\dd\rho+Y_z\dd z$ are complex 1-forms, then a straightforward computation of their wedge product yields
\begin{equation}
\begin{aligned}
     &\star_{\Base}(X\wedge\star_{\Base}Y)=X_\rho Y_\rho+X_zY_z,
     \\
     &\star_{\Base}\dd(\rho\star_{\Base}X)
     =\partial_\rho(\rho X_\rho)+\partial_z(\rho X_z).
\end{aligned}
 \label{eq:HodgeProductDivergence}
\end{equation}
Consequently \eqref{eq:Aeq}-\eqref{eq:Ceq} are exactly equivalent to
{\footnotesize
\begin{align}
     \frac1\rho\star_{\Base}\dd(\rho\star_{\Base}A)
     &=\star_{\Base}(B\wedge\star_{\Base}\bars B)
     +\star_{\Base}[A\wedge\star_{\Base}(A-\bars A)],
     \nonumber\\
     \frac1\rho\star_{\Base}\dd(\rho\star_{\Base}B)
     &=-\frac12\star_{\Base}[B\wedge\star_{\Base}(A-3\bars A)]
     +\star_{\Base}(C\wedge\star_{\Base}\bars B),
     \nonumber\\
     \frac1\rho\star_{\Base}\dd(\rho\star_{\Base}C)
     &=\frac{\alphao^2}{2\epss}\star_{\Base}
     \left(B\wedge\star_{\Base}B+\bars B\wedge\star_{\Base}\bars B\right).
     \label{eq:ABCexterior}
\end{align}
}
\subsection{Reduced action and Hamiltonian tensor}

The stationary-axisymmetric equations follow from the harmonic-map functional
{\small
\begin{equation}
     I_{\rm pot}=\int_{\Base}\rho\,
     G_{AB}(Y)\,D Y^A\cdot D Y^B\,\dd\rho\dd z,
     \quad
     Y^A=(f,\epsilon,\psi,\chi,\kappa),
     \label{eq:potentialAction}
\end{equation}
}
whose Euler-Lagrange equation is
\begin{equation}
     \frac1\rho D(\rho D Y^A)
     +\widehat\Gamma^A{}_{BC}(Y)\,D Y^B\cdot D Y^C=0.
     \label{eq:harmonicMapEquation}
\end{equation}
where $\widehat\Gamma^A{}_{BC}$ denote the Levi-Civita connection coefficients of the target metric $G$, the capital indices refer to the five target coordinates, and contractions on the base are taken with $\gamma$.  Expressing \eqref{eq:harmonicMapEquation} in the coframe \eqref{eq:ABC} yields precisely \eqref{eq:Aeq}-\eqref{eq:Ceq}.

The pullback of the potential-space metric is the \emph{Hamiltonian tensor}
{\small
\begin{equation}
     \Ham_{ij}:=G_{AB}(Y)\,\partial_iY^A\partial_jY^B
     =2\Rea(A_i\bars A_j-B_i\bars B_j)
     +\frac{2\epss}{\alphao^2}C_iC_j .
     \label{eq:HamiltonianTensor}
\end{equation}
}
Here $i,j\in\{\rho,z\}$ are Weyl-base indices. Following the Hamiltonian construction of Ref.~\cite{Bixano:2026duf}, we call $\Ham$ the \emph{Hamiltonian tensor}. The Weyl function $k$ is not an independent target coordinate, it is reconstructed from the trace-free part of \eqref{eq:HamiltonianTensor}.

In the Weyl chart, the remaining Einstein equations are precisely
\begin{equation}
     k_{,\rho}=\frac\rho2(\Ham_{\rho\rho}-\Ham_{zz}),
     \qquad
     k_{,z}=\rho\Ham_{\rho z}.
     \label{eq:kHamiltonianConstraints}
\end{equation}
If \(\Ham^\circ := \Ham-\frac12\bigl(\Tr_\gamma\Ham\bigr)\gamma\) denotes the trace-free part of the Hamiltonian tensor with respect to \(\gamma=\dd\rho^2+\dd z^2\), then, in the Weyl coframe \((\dd\rho,\dd z)\),
\begin{equation}
    \begin{split}
    \Ham^\circ
    ={}&
    \frac12
    \left(
    \Ham_{\rho\rho}-\Ham_{zz}
    \right)
    \left(
    \dd\rho\otimes\dd\rho
    -
    \dd z\otimes\dd z
    \right)
    \\
    &+
    \Ham_{\rho z}
    \left(
    \dd\rho\otimes\dd z
    +
    \dd z\otimes\dd\rho
    \right).
    \end{split}
    \label{eq:HamiltonianTraceFree}
\end{equation}
Moreover, the \(\gamma\)-gradient of the area coordinate is \(\nabla\rho = \gamma^{ij}(\partial_j\rho)\partial_i = \partial_\rho\). Therefore, contraction of the first argument of \(\Ham^\circ\) with \(\nabla\rho\) gives the one-form \(\Ham^\circ(\nabla\rho,\cdot) = \frac12 \left( \Ham_{\rho\rho}-\Ham_{zz} \right)\dd\rho + \Ham_{\rho z}\dd z\). Hence the two first-order equations for \(k\) combine into
\begin{equation}
    \dd k
    =
    \rho\,\Ham^\circ(\nabla\rho,\cdot)
    =
    \frac{\rho}{2}
    \left(
    \Ham_{\rho\rho}-\Ham_{zz}
    \right)\dd\rho
    +
    \rho\,\Ham_{\rho z}\dd z
    \label{eq:kHamiltonianGeometric}
\end{equation}
This proves the terminology: $k$ is the potential of the trace-free Hamiltonian tensor of the rank-two map.

\section{Derivation of the complete PD--potential-space dictionary}
\label{sec:PDdictionary}

\subsection{The PD ansatz and its Killing block}

We now restrict attention to the Einstein–Maxwell case, setting $\kappa = 1$ and $\alphao = 0$, so that the independent $C$ equation no longer appears. For the moment we do not employ the reduced $A$ and $B$ equations. Throughout, we work in the Killing coordinate system $(t,\varphi)$, where $t$ is the stationary coordinate and $\varphi$ is the azimuthal one. The remaining coordinates $(p,q)$ are taken to be the principal Carter eigenvalue coordinates. We define $\Sig := p^2 + q^2$ and introduce the acceleration parameter $\acc$ explicitly via $\Om := 1 - \acc pq$. The PD ansatz is
{\small
\begin{equation}
 \begin{aligned}
     \dd s^2=&\frac1{\Om^2}\Bigg[
     -\frac{\QQ(q)}{\Sig}(\dd t-p^2\dd\varphi)^2
     +\frac{\PP(p)}{\Sig}(\dd t+q^2\dd\varphi)^2\\
     &+\Sig\left(\frac{\dd p^2}{\PP(p)}
     +\frac{\dd q^2}{\QQ(q)}\right)\Bigg],
     \quad
     \Om=1-\acc pq,
     \quad \Sig=p^2+q^2 .
 \end{aligned}
 \label{eq:PDmetric}
\end{equation}
}
The aligned dyonic gauge potential is
\begin{equation}
     \mathcal A=\frac{eq}{\Sig}(\dd t-p^2\dd\varphi)
     +\frac{gp}{\Sig}(\dd t+q^2\dd\varphi).
     \label{eq:PDfourpotential}
\end{equation}
In this principal chart, the symbols $m,n,e,g,k_{\rm PD},\varepsilon$ denote, respectively, the structural mass, the NUT parameter, the electric and magnetic parameters, and the two remaining PD parameters; their physical scaling is determined by the choice of global parametrization.  Note that the real structural coefficient $\varepsilon$, the twist potential $\epsilon$, and the scalar-sector sign $\epss$ are three different quantities.  We keep the cosmological constant $\Lambda$ explicitly in the structural functions for now, so that we can deduce why the present potential theory enforces $\Lambda=0$. With $\acc$ retained, the structure functions are
{\footnotesize
\begin{align}
     \PP(p)&=k_{\rm PD}+2np-\varepsilon p^2+2\acc mp^3
     -\left[\acc^2(k_{\rm PD}+e^2+g^2)+\frac\Lambda3\right]p^4,
     \label{eq:PDquarticPfull}\\
     \QQ(q)&=k_{\rm PD}+e^2+g^2-2mq+\varepsilon q^2-2\acc nq^3
     -\left(\acc^2k_{\rm PD}+\frac\Lambda3\right)q^4.
     \label{eq:PDquarticQfull}
\end{align}
}
For $\acc>0$ this normalization follows rigorously from the unit-acceleration chart by
\(
 \bar p=\sqrt{\acc}\,p,\quad \bar q=\sqrt{\acc}\,q,\quad
 \bar t=t/\sqrt{\acc},\quad \bar\varphi=\varphi/\acc^{3/2},\quad
 \bar\PP=\acc^2\PP,\quad \bar\QQ=\acc^2\QQ,
\)
together with $\bar k_{\rm PD}=\acc^2k_{\rm PD}$, $(\bar m,\bar n)=\acc^{3/2}(m,n)$, $\bar\varepsilon=\acc\varepsilon$ and $(\bar e,\bar g)=\acc(e,g)$.  This derivation demonstrates that introducing $\acc$ modifies not only the quartic terms but also $\Om$, and the given polynomial expressions thereby characterize the continuous nonaccelerating limit $\acc=0$.
The condition $\Lambda=0$ will be obtained from the Weyl reduction presented below.

Expanding the metric components we obtain
\begin{equation*}
 g_{tt}=\frac{\PP-\QQ}{\Om^2\Sig},
 \qquad
 g_{t\varphi}=\frac{\QQ p^2+\PP q^2}{\Om^2\Sig},
 \qquad
 g_{\varphi\varphi}=\frac{\PP q^4-\QQ p^4}{\Om^2\Sig},
\end{equation*}
where $\RR:=\QQ-\PP$. By comparison with \eqref{eq:WeylMetric}, we find
\begin{equation}
 f=-g_{tt}=\frac{\RR}{\Om^2\Sig},
 \qquad
 \omega=\frac{g_{t\varphi}}{f}
 =\frac{\QQ p^2+\PP q^2}{\RR}.
 \label{eq:PDfomega}
\end{equation}
Now by definition \(\rho^2=-\det(g_{ab})_{a,b=t,\varphi}\),  using the numerator of the determinant as 
\begin{align*}
 &(\PP-\QQ)(\PP q^4-\QQ p^4)
 -(\QQ p^2+\PP q^2)^2\\
 &\hspace{35mm}=-\PP\QQ(p^2+q^2)^2, 
\end{align*}
Therefore, on a patch where $\PP\QQ>0$,
\begin{equation}
    \rho=\frac{\sqrt{\PP\QQ}}{\Om^2}.
     \label{eq:PDrhoDerived}
\end{equation}

\subsection{The two-dimensional Hodge operator and the Weyl coordinate \texorpdfstring{$z$}{z}}

The metric on the principal orbit space is conformally equivalent to \(h_0=\frac{\dd p^2}{\PP}+\frac{\dd q^2}{\QQ}\). In two dimensions, the Hodge operator on 1-forms is unaffected by a conformal rescaling. With the orientation specified by \(\dd p\wedge\dd q>0\), we have
\begin{equation}
     \star_{\rm PD}\dd p=\sqrt{\frac\PP\QQ}\dd q,
     \qquad
     \star_{\rm PD}\dd q=-\sqrt{\frac\QQ\PP}\dd p.
     \label{eq:PDHodge}
\end{equation}
A local Weyl coordinate $z$ exists precisely when
\begin{equation}
     \dd z=\star_{\rm PD}\dd\rho,
     \qquad
     \dd(\star_{\rm PD}\dd\rho)=0.
     \label{eq:zDefinition}
\end{equation}
Substituting \eqref{eq:PDrhoDerived} together with the complete quartic expressions \eqref{eq:PDquarticPfull}--\eqref{eq:PDquarticQfull} and then differentiating yields
\begin{equation}
     \dd(\star_{\rm PD}\dd\rho)
     =-\frac{2\Lambda\Sig}{\Om^4}\dd p\wedge\dd q.
     \label{eq:PDLambdaObstruction}
\end{equation}
Thus, the harmonic-map reduction employed in this work is consistent with the PD ansatz only in the case $\Lambda = 0$. When $\Lambda \neq 0$, the area function acquires a source term and the resulting reduced theory becomes a sigma model with a potential, rather than \eqref{eq:Aeq}--\eqref{eq:Ceq}.

For $\Lambda=0$, integration of \eqref{eq:zDefinition} yields
{\footnotesize
\begin{equation}
 z=\frac{(1+\acc pq)(mp+nq)-\varepsilon pq
 +\acc k_{\rm PD}(q^2-p^2)-\acc(e^2+g^2)p^2}{(1-\acc pq)^2}+z_0.
 \label{eq:PDz}
\end{equation}
}
This is verified without an integration guess by differentiating \eqref{eq:PDz} and checking
\begin{equation*}
 z_{,p}=-\sqrt{\frac\QQ\PP}\rho_{,q},
 \qquad
 z_{,q}=\sqrt{\frac\PP\QQ}\rho_{,p},
\end{equation*}
which are exactly the two components of \eqref{eq:zDefinition}.

More generally, if we write $\PP=\sum_{j=0}^4 a_j p^j$ and $\QQ=\sum_{j=0}^4 b_j q^j$, then a direct comparison of coefficients in the identity $\dd(\star_{\rm PD}\dd\rho)=0$ yields
{\footnotesize
\begin{equation}
 a_4=-\acc^2 b_0,\quad a_3=-\acc b_1,\quad
 b_2=-a_2,\quad b_3=-\acc a_1,\quad b_4=-\acc^2a_0.
 \label{eq:PDantireciprocal}
\end{equation}
}
When $\acc=1$ they reduce to $b_j=-a_{4-j}$.  The $\Lambda=0$ quartics above satisfy them together with $b_0-a_0=e^2+g^2$.

\subsection{Closed reconstruction of \texorpdfstring{$k$}{k}}

The defining relations for $z$ imply
{\small
\begin{equation}
     \dd\rho^2+\dd z^2
     =\lambda(p,q)\left(\frac{\dd p^2}{\PP}+\frac{\dd q^2}{\QQ}\right),
     \qquad
     \lambda:=\PP\rho_{,p}^2+\QQ\rho_{,q}^2.
     \label{eq:lambdaConformal}
\end{equation}
}
The mixed term disappears because of the two first-order relations derived from \eqref{eq:PDz}, and both diagonal coefficients are equal to $\lambda$. Comparing \eqref{eq:PDmetric} with \eqref{eq:WeylMetric} and employing \eqref{eq:PDfomega} then yields
\begin{equation}
     \e^{2k}=\frac{\RR}{\Om^4\lambda},
     \qquad
     k=\frac12\ln\RR-2\ln\Om-\frac12\ln\lambda.
     \label{eq:PDkClosedDerived}
\end{equation}
Since
\begin{equation*}
     \lambda=\frac{\QQ(\Om\PP'+4\acc q\PP)^2
     +\PP(\Om\QQ'+4\acc p\QQ)^2}{4\Om^6},
\end{equation*}
one may equivalently write
\begin{equation}
     \e^{2k}=\frac{4\Om^2\RR}
     {\QQ(\Om\PP'+4\acc q\PP)^2+\PP(\Om\QQ'+4\acc p\QQ)^2}.
     \label{eq:PDkExplicit}
\end{equation}

\subsection{Electromagnetic and twist potentials}

Separating the $t$ and $\varphi$ components of \eqref{eq:PDfourpotential} gives
\begin{equation*}
     A_t=\frac{eq+gp}{\Sig},
     \qquad
     A_\varphi=\frac{-eqp^2+gpq^2}{\Sig}.
\end{equation*}
Therefore
\begin{equation}
    \psi=\frac{2(eq+gp)}{\Sig}.
     \label{eq:PDpsiDerived}
\end{equation}
In electrovacuum $\kappa=1$. Inserting $f,\rho,\omega,A_t,A_\varphi$ into the defining equation for $\chi$ in \eqref{eq:potdefinitions}, using \eqref{eq:PDHodge}, gives
\begin{equation*}
     \dd\chi=\frac{2[e(q^2-p^2)+2gpq]}{\Sig^2}\dd p
     +\frac{2[g(q^2-p^2)-2epq]}{\Sig^2}\dd q.
\end{equation*}
The right-hand side is exact, and hence
\begin{equation}
     \chi=\frac{2(ep-gq)}{\Sig}.
     \label{eq:PDchiDerived}
\end{equation}
Differentiation we see \(\psi_{,p}=\chi_{,q}, \qquad \psi_{,q}=-\chi_{,p}\), the electromagnetic map is conformal in the principal plane. In particular,
\begin{equation}
     \dd\psi\wedge\dd\chi
     =\frac{4(e^2+g^2)}{\Sig^2}\dd p\wedge\dd q.
     \label{eq:PDEMarea}
\end{equation}

The 1-form follows from its metric definition in \eqref{eq:potdefinitions}:
\begin{equation}
     \Theta=2f\left(\frac{\QQ'}{2\RR}-\frac q\Sig\right)\dd p
     +2f\left(\frac{\PP'}{2\RR}+\frac p\Sig\right)\dd q.
     \label{eq:PDTheta}
\end{equation}
The rotational potential is fixed, up to an additive gauge constant, by the relation \( \dd\epsilon=\Theta+\psi\dd\chi\). Hence, on a simply connected region
\begin{align}
     &\epsilon(p,q)=\epsilon_{\star}
     +\frac12\psi\chi \nonumber \\
     &+\frac{2\left\{
     mp-nq-\mathfrak a\left[
     (k_{\rm PD}+e^2+g^2)p^2+k_{\rm PD}q^2
     \right]\right\}}
     {(1-\mathfrak a pq)(p^2+q^2)}.
     \label{eq:PDepsilonClosed}
\end{align}

\subsection{The PD 1-forms \texorpdfstring{$(A,B,C)$}{(A,B,C)}}

From $\ln f=\ln\RR-2\ln\Om-\ln\Sig$ and \eqref{eq:PDTheta}, the gravitational 1-form \(A\) is
\begin{equation}
     A=A_p\dd p+A_q\dd q,
     \label{eq:PDAdefinition}
\end{equation}
with
\begin{align}
     A_p&=-\frac{\PP'}{2\RR}+\frac{\acc q}\Om-\frac p\Sig
     -\ii\left(\frac{\QQ'}{2\RR}-\frac q\Sig\right),
     \label{eq:PDAp}\\
     A_q&=\frac{\QQ'}{2\RR}+\frac{\acc p}\Om-\frac q\Sig
     -\ii\left(\frac{\PP'}{2\RR}+\frac p\Sig\right).
     \label{eq:PDAq}
\end{align}
Equations \eqref{eq:PDpsiDerived}-\eqref{eq:PDchiDerived} give
\begin{equation}
     B=\frac{e+\ii g}{\sqrt f}
     \frac{\dd q+\ii\dd p}{(q+\ii p)^2},
     \qquad C=0.
     \label{eq:PDBcurrent}
\end{equation}
In components,
{\small
\begin{equation}
     B_p=\ii\frac{e+\ii g}{\sqrt f(q+\ii p)^2},
     \quad
     B_q=\frac{e+\ii g}{\sqrt f(q+\ii p)^2},
     \quad B_p-\ii B_q=0.
     \label{eq:PDBcomponents}
\end{equation}
}

Collecting the results, the complete local dictionary is
\begin{widetext}
\begin{equation}
     \begin{gathered}
     \Om=1-\acc pq,\quad \Sig=p^2+q^2,\quad \RR=\QQ-\PP,\\
     f=\frac\RR{\Om^2\Sig},\quad
     \omega=\frac{\QQ p^2+\PP q^2}{\RR},\quad
     \rho=\frac{\sqrt{\PP\QQ}}{\Om^2},\quad
     z\ \hbox{given by \eqref{eq:PDz}},\quad
     \e^{2k}=\frac\RR{\Om^4\lambda},\\
     \psi=\frac{2(eq+gp)}\Sig,\quad
     \chi=\frac{2(ep-gq)}\Sig,\quad
     \epsilon\ \hbox{given by \eqref{eq:PDepsilonClosed}},\\
     A=A_p\dd p+A_q\dd q\ \hbox{with \eqref{eq:PDAp}--\eqref{eq:PDAq}},\quad
     B=\frac{e+\ii g}{\sqrt f}\frac{\dd q+\ii\dd p}{(q+\ii p)^2},\quad C=0.
     \end{gathered}
     \label{eq:PDmasterDictionary}
\end{equation}
\end{widetext}

\subsection{Direct closure of the \texorpdfstring{$(A,B,C)$}{(A,B,C)} equations}

At this point, it has not yet been established that the 1-forms \eqref{eq:PDmasterDictionary} obey the compact field equations \eqref{eq:FieldEquations}. Using \eqref{eq:lambdaConformal} and denote by $\gamma_{pq}$ the matrix of the Weyl-base metric $\gamma$ in the coordinates $(p,q)$. For arbitrary 1-forms $X=X_p\,\dd p+X_q\,\dd q$ and $Y=Y_p\,\dd p+Y_q\,\dd q$,
\begin{align*}
     &X\cdot Y=\frac1\lambda(\PP X_pY_p+\QQ X_qY_q),
     \\
     &\frac1\rho D(\rho X)=\frac{\Om^2}{\lambda}
     \left[\partial_p\left(\frac{\PP X_p}{\Om^2}\right)
     +\partial_q\left(\frac{\QQ X_q}{\Om^2}\right)\right].
\end{align*}
We have used the inverse metric of \eqref{eq:lambdaConformal}, $\sqrt{\det\gamma_{pq}}=\lambda/\sqrt{\PP\QQ}$ and $\rho\sqrt{\det\gamma_{pq}}=\lambda/\Om^2$.

\paragraph{The $B$-field equation} transform to
\begin{align*}
 &\Om^2\left[
 \partial_p\left(\frac{\PP B_p}{\Om^2}\right)
 +\partial_q\left(\frac{\QQ B_q}{\Om^2}\right)\right]
 ={}-\frac12\PP B_p(A_p-3\bars A_p)\nonumber\\
 & -\frac12\QQ B_q(A_q-3\bars A_q)+\PP C_p\bars B_p+\QQ C_q\bars B_q.\nonumber
\end{align*}
For the aligned PD 1-form, set $B_q = b$ and $B_p = \ii b$. Using \eqref{eq:PDBcomponents} and $f = \RR / (\Om^2 \Sig)$, we obtain
\[
 \partial_p\ln b=-\Rea A_p-\frac{2\ii}{q+\ii p},
 \qquad
 \partial_q\ln b=-\Rea A_q-\frac{2}{q+\ii p}.
\]
Dividing the B-Field equation by $b$ and using the two identities, then all terms containing $\PP'$ and $\QQ'$ cancel, and the remaining expression is
\begin{equation*}
 \frac{2(\PP-\QQ)}{q+\ii p}
 +\frac{2[(\QQ-\PP)q+\ii(\PP-\QQ)p]}{p^2+q^2}=0.
\end{equation*}

\paragraph{The $A$-field equation} gives the remaining nontrivial closure condition:
{\footnotesize
\begin{align}
 \Om^2\left[
 \partial_p\left(\frac{\PP A_p}{\Om^2}\right)
 +\partial_q\left(\frac{\QQ A_q}{\Om^2}\right)
 \right]
 ={}&\PP\{B_p\bars B_p+A_p(A_p-\bars A_p)\}\nonumber\\
 &+\QQ\{B_q\bars B_q+A_q(A_q-\bars A_q)\},
\end{align}
}
whose numerator vanishes once the five acceleration-dependent relations \eqref{eq:PDantireciprocal} and $b_0-a_0=e^2+g^2$ are applied.

Finally, by differentiating \eqref{eq:PDkClosedDerived}, mapping the outcome via the Jacobian $(p,q)\mapsto(\rho,z)$, and substituting the Hamiltonian tensor constructed from \eqref{eq:PDAp}-\eqref{eq:PDBcurrent}, we recover both relations in \eqref{eq:kHamiltonianConstraints}. This establishes that the metric and Hamiltonian reconstructions of $k$ coincide.

\subsection{The conformal Carter class and Petrov type D}
\label{sec:carter}

Consider the off-shell conformal Carter metric
{\small
\begin{equation}
    \begin{aligned}
    \dd s^2&=\frac1{\Om(p,q)^2}\Bigg[
    -\frac{\QQ(q)}{\Sig}(\dd t-p^2\dd\varphi)^2
    +\frac{\Sig}{\QQ(q)}\dd q^2\\
    &+\frac{\Sig}{\PP(p)}\dd p^2
    +\frac{\PP(p)}{\Sig}(\dd t+q^2\dd\varphi)^2
    \Bigg],
    \qquad \Sig=p^2+q^2 .
    \end{aligned}
    \label{eq:Cartermetric}
\end{equation}
}
No field equation or polynomial form for $\PP,\QQ$ is assumed here.  We use the orthonormal coframe
{\footnotesize
\begin{align}
     \vartheta^0&=\Om^{-1}\sqrt{\QQ/\Sig}(\dd t-p^2\dd\varphi),
     &\vartheta^1&=\Om^{-1}\sqrt{\Sig/\QQ}\dd q,\nonumber\\
     \vartheta^2&=\Om^{-1}\sqrt{\Sig/\PP}\dd p,
     &\vartheta^3&=\Om^{-1}\sqrt{\PP/\Sig}(\dd t+q^2\dd\varphi),
     \label{eq:coframe}
\end{align}
}
where $\{e_0,e_1,e_2,e_3\}$ is the frame dual to $\{\vartheta^0,\vartheta^1,\vartheta^2,\vartheta^3\}$.  The associated principal Newman-Penrose null tetrad is
{\footnotesize
\begin{equation}
     \boldsymbol k=\frac{e_0+e_1}{\sqrt2},\quad
     \boldsymbol l=\frac{e_0-e_1}{\sqrt2},\quad
     \boldsymbol m=\frac{e_3+\ii e_2}{\sqrt2},\quad
     \bars{\boldsymbol m}=\frac{e_3-\ii e_2}{\sqrt2}.
     \label{eq:NPtetrad}
\end{equation}
}

Denoting by $\Psi_0,\ldots,\Psi_4$ its five Newman-Penrose scalars in the tetrad \eqref{eq:NPtetrad}, we obtain \(\Psi_0=\Psi_1=\Psi_3=\Psi_4=0\), and
\begin{equation}
     \Psi_2=\frac{\Om^2}{12}\frac{(q+\ii p)^2}{q-\ii p}
     \left[
     \partial_q^2\left(\frac\QQ{(q+\ii p)^3}\right)
     +\partial_p^2\left(\frac\PP{(q+\ii p)^3}\right)
     \right].
     \label{eq:Psi2OffShell}
\end{equation}
Hence, \eqref{eq:Cartermetric} is of Petrov type D wherever \eqref{eq:Psi2OffShell} does not vanish, and it becomes type O at those conformally flat limiting cases. The field equations determine which choices of $\Om,\PP,\QQ$ and matter 1-forms can realize this geometry.

We now introduce the auxiliary Euclidean metric $\gamma_0:=\dd p^2+\dd q^2$ on the principal coordinate plane and its Hodge operator
\begin{equation}
 \star_0\dd p=\dd q,
 \qquad
 \star_0\dd q=-\dd p,
 \qquad
 \Pi_\pm=\frac12(1\mp\ii\star_0).
 \label{eq:star0}
\end{equation}
This is the flat complex structure determined by the Carter eigenvalue coordinates, and it should not be confused with the physical Hodge operator $\star_{\Base}$ in \eqref{eq:potdefinitions}.  After complexification, $\Pi_+$ and $\Pi_-$ project onto the two rank-one eigensubbundles of $T^*\Base\otimes\mathbb C$ in the principal chart, with eigenvalues $+\ii$ and $-\ii$, respectively.

\subsection{MAxwell invariants and aligned electromagnetic potential}
\label{sec:NP}

Using the projectors \eqref{eq:star0} over \(B \) given by \eqref{eq:PDBcurrent}, gives
\begin{equation} \label{eq:BprojectorsExplicit}
\begin{aligned}
    &\Pi_-B=\frac12(B_p-\ii B_q)(\dd p+\ii\dd q),
     \\
     &\Pi_+B=\frac12(B_p+\ii B_q)(\dd p-\ii\dd q).
\end{aligned}
\end{equation}

The same occurs for the 1-form A. The combination $B_p-\ii B_q$ is not arbitrary: it is the coefficient of one of the two principal chiralities. The second combination below is weighted by $\PP,\QQ$. For compactness, let
\begin{equation}
 B_\pm:=B_p \pm \ii B_q,
 \qquad
 \Bal:=\QQ B_q+\ii\PP B_p .
 \label{eq:Bpm}
\end{equation}

\begin{proposition}[Maxwell invariants]
With the tetrad \eqref{eq:NPtetrad} and the conventions
{\footnotesize
\begin{equation*}
 \Phi_0=F_{\mu\nu}k^\mu m^\nu,
 \quad
 \Phi_1=\frac12F_{\mu\nu}(k^\mu l^\nu+\bars m^\mu m^\nu),
 \quad
 \Phi_2=F_{\mu\nu}\bars m^\mu l^\nu,
\end{equation*}
}
one has
\begin{subequations}
    \begin{align}
     \Phi_0=\Phi_2
     =\frac{\ii\Om^2\sqrt{f\PP\QQ}}{2\kappa\RR}\, B_-,
     \label{eq:Phi0B}\\
     \Phi_1=-\frac{\Om^2\sqrt f}{2\kappa\RR}\,\Bal .
     \label{eq:Phi1B}
    \end{align}
\end{subequations}
\end{proposition}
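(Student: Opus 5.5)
The plan is to compute the three Maxwell scalars directly from $F=\dd\mathcal A=\dd A_t\wedge\dd t+\dd A_\varphi\wedge\dd\varphi$ in the orthonormal coframe \eqref{eq:coframe}. I will then rewrite every component in terms of $B$ using only the basic dictionary \eqref{eq:basicDictionary}. That dictionary holds before any field equation is imposed and for arbitrary $\kappa$, so the result is valid off shell in the conformal Carter class \eqref{eq:Cartermetric}.

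\textbf{Step 1: frame components of $F$.} The Killing-block coframe is spanned by $\dd t-p^2\dd\varphi$ and $\dd t+q^2\dd\varphi$. Inverting it gives
\begin{equation*}
e_0=\frac{\Om}{\sqrt{\QQ\Sig}}\,(q^2\partial_t-\partial_\varphi),\qquad
e_3=\frac{\Om}{\sqrt{\PP\Sig}}\,(p^2\partial_t+\partial_\varphi),
\end{equation*}
together with $e_1=\Om\sqrt{\QQ/\Sig}\,\partial_q$ and $e_2=\Om\sqrt{\PP/\Sig}\,\partial_p$. Since $A_t$ and $A_\varphi$ depend only on $(p,q)$, two components vanish: $F(e_0,e_3)=0$ and $F(e_1,e_2)=0$. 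The four surviving components are $\iota_{e_0}F$ and $\iota_{e_3}F$ evaluated on $e_1,e_2$. They are built from the 1-forms $q^2\dd A_t-\dd A_\varphi$ and $p^2\dd A_t+\dd A_\varphi$, up to sign and the prefactors above.

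\textbf{Step 2: the two Maxwell scalars in frame form.} Expanding the tetrad \eqref{eq:NPtetrad} and using the two vanishing components gives
\begin{equation*}
\Phi_0=\tfrac12\bigl[\ii F(e_0,e_2)+F(e_1,e_3)\bigr],\qquad
\Phi_1=-\tfrac12\bigl[F(e_0,e_1)+\ii F(e_2,e_3)\bigr].
\end{equation*}
The expression for $\Phi_2$ reduces to the same combination as $\Phi_0$, which proves $\Phi_0=\Phi_2$ before any further computation.

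\textbf{Step 3: rewriting in terms of $B$.} I substitute the following relations from \eqref{eq:basicDictionary}:
\begin{equation*}
\dd A_t=\tfrac12\dd\psi,\qquad
\dd A_\varphi=-\tfrac{\omega}{2}\dd\psi-\frac{\rho}{2f\kappa^2}\star_{\rm PD}\dd\chi,
\end{equation*}
with $\omega,f,\rho$ taken from \eqref{eq:PDfomega} and \eqref{eq:PDrhoDerived}, and $\star_{\rm PD}$ from \eqref{eq:PDHodge}. Then I insert
\begin{equation*}
\dd\psi=-\frac{\sqrt f}{\kappa}(B+\bars B),\qquad
\dd\chi=-\ii\sqrt f\,\kappa\,(B-\bars B).
\end{equation*}
The combination $q^2-\omega$ equals $-\PP\Sig/\RR$, and $p^2+\omega$ equals $\QQ\Sig/\RR$. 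These should produce the common prefactor $\Om^2\sqrt f/(\kappa\RR)$ after multiplication by the frame normalizations, since $\rho=\sqrt{\PP\QQ}/\Om^2$ and $f=\RR/(\Om^2\Sig)$. What remains is to check that the $\bars B$ contributions cancel. In $\Phi_0$ I expect the combination $B_p-\ii B_q=B_-$, weighted by $\sqrt{\PP\QQ}$. In $\Phi_1$ I expect the weighted combination $\QQ B_q+\ii\PP B_p=\Bal$, where the weights come from the $\PP$ and $\QQ$ factors in $e_1,e_2$ and in $\star_{\rm PD}$.

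\textbf{Main obstacle.} The cancellation of the $\bars B$ terms is the delicate step. It depends on fixing the Hodge sign and orientation consistently: $\dd p\wedge\dd q>0$ must be matched with $\eta_{t\rho z\varphi}>0$. A sign error there would produce $B_+$ instead of $B_-$, or $\bars{\Bal}$ instead of $\Bal$. I will first check the $\Phi_1$ expression on the aligned PD forms \eqref{eq:PDBcomponents}, where $B_-=0$ must force $\Phi_0=0$, and use that case to fix the overall phases.
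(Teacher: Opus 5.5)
Your proposal is correct and is essentially the paper's proof. The appendix likewise resolves $F$ into $F_{01},F_{02},F_{13},F_{23}$ via the $\chi$-definition, reduces to $\Phi_0=\Phi_2=\tfrac12(\ii F_{02}+F_{13})$ and $\Phi_1=-\tfrac12(F_{01}+\ii F_{23})$, and matches the resulting brackets in $\psi_{,p},\psi_{,q},\kappa^{-2}\chi_{,p},\kappa^{-2}\chi_{,q}$ against $B_-$ and $\Bal$. You substitute in the opposite direction, which is equivalent, and with $\star_{\rm PD}$ from \eqref{eq:PDHodge} the $\bars B$ terms cancel identically, so no phase needs to be fixed from the PD case. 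One slip to correct: the combinations your Step 1 actually produces are $q^2+\omega=\QQ\Sig/\RR$ and $p^2-\omega=-\PP\Sig/\RR$, whereas the $q^2-\omega$ and $p^2+\omega$ you wrote do not take those values.
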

\begin{proof}
The definition of $\chi$ in \eqref{eq:potdefinitions} initially rewrites the four nonvanishing orthonormal components $F_{01},F_{02},F_{13},F_{23}$ in terms of $(\dd\psi,\kappa^{-2}\dd\chi)$.  In the principal tetrad one has
\begin{equation*}
 \Phi_0=\Phi_2=\frac12(\ii F_{02}+F_{13}),
 \qquad
 \Phi_1=-\frac12(F_{01}+\ii F_{23}).
\end{equation*}
Inserting \eqref{eq:ABC} into these expressions yields \eqref{eq:Phi0B}--\eqref{eq:Phi1B}.  The component-level derivations are presented in \cref{app:NPproof}.
\end{proof}

\begin{corollary}[Alignment and chirality]
In a regular domain where $\{\Om,\RR, f,\PP,\QQ,\kappa\}\neq0$, a nontrivial Maxwell field is doubly aligned with the repeated Weyl principal null directions precisely when
{\small
\begin{equation}
  B_-=0
\Longleftrightarrow
 B_p=\ii B_q
 \Longleftrightarrow
 \star_0B=\ii B
 \Longleftrightarrow
 \Pi_-B=0 .
 \label{eq:alignmentchirality}
\end{equation}
}
\end{corollary}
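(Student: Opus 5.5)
The plan is to read the corollary directly off the Maxwell-invariant proposition, \eqref{eq:Phi0B}--\eqref{eq:Phi1B}, combined with the standard Newman--Penrose characterization of alignment.

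\emph{Alignment in NP language.} The tetrad \eqref{eq:NPtetrad} is built from $e_0\pm e_1$. For the conformal Carter metric \eqref{eq:Cartermetric}, $\boldsymbol k$ and $\boldsymbol l$ are the two repeated principal null directions of the Weyl tensor. This follows because $\Psi_0=\Psi_1=\Psi_3=\Psi_4=0$ with $\Psi_2\neq0$ as in \eqref{eq:Psi2OffShell}, which holds on the type-D domain.

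A Maxwell field is doubly aligned with these directions, meaning both $\boldsymbol k$ and $\boldsymbol l$ are principal null directions of $F$, precisely when $\Phi_0=0$ and $\Phi_2=0$. In that case the nontrivial field has $\Phi_1\neq0$ and is non-null. This is the only external input, and it is standard. Conversely, if $\Phi_0=\Phi_2=0$ then $F$ is determined by $\Phi_1$ alone, and its principal null directions are $\boldsymbol k$ and $\boldsymbol l$.

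\emph{First equivalence.} By \eqref{eq:Phi0B} we have $\Phi_0=\Phi_2$, so alignment reduces to the single condition $\Phi_0=0$. The prefactor $\ii\Om^2\sqrt{f\PP\QQ}/(2\kappa\RR)$ is finite and nonzero on the stated regular domain, because $\Om$, $f$, $\PP$, $\QQ$, $\kappa$ and $\RR$ are all nonvanishing there. Hence $\Phi_0=0$ holds if and only if $B_-=0$.

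Nontriviality is automatic in the aligned case. If also $\Bal=0$, then together with $B_-=0$ (that is, $B_p=\ii B_q$) we get $\Bal=(\QQ-\PP)B_q=\RR B_q=0$. This forces $B=0$, and with the dictionary \eqref{eq:basicDictionary} it follows that $\dd\psi=\dd\chi=0$, i.e.\ $F=0$. So a nontrivial aligned field indeed has $\Phi_1\neq0$ by \eqref{eq:Phi1B}.

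\emph{Remaining equivalences.} These are pure linear algebra, using \eqref{eq:star0} and \eqref{eq:BprojectorsExplicit}:
\begin{itemize}
\item $B_-=B_p-\ii B_q=0$ is literally $B_p=\ii B_q$.
\item Since $\star_0B=-B_q\,\dd p+B_p\,\dd q$, the equation $\star_0B=\ii B$ reads $-B_q=\ii B_p$ and $B_p=\ii B_q$, and these two components are the same condition.
\item By \eqref{eq:BprojectorsExplicit}, $\Pi_-B=\tfrac12 B_-(\dd p+\ii\dd q)$, which vanishes exactly when $B_-=0$.
\end{itemize}

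The only step needing care, and the main potential obstacle, is justifying that $\boldsymbol k$ and $\boldsymbol l$ are the repeated principal null directions. One must also ensure the regularity hypotheses exclude points where the tetrad degenerates, namely where $\PP\QQ\to0$ or $\Psi_2=0$. I would state that alignment is meaningful only on the open set where $\Psi_2\neq0$, and note that the equivalence $\Phi_0=0\Leftrightarrow B_-=0$ holds on the whole regular domain regardless.
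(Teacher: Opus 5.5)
Your proposal is correct and follows the paper's route: the corollary is read off from the Maxwell-invariant formula \eqref{eq:Phi0B}, whose prefactor is nonvanishing on the regular domain, so $\Phi_0=\Phi_2=0\Leftrightarrow B_-=0$; the remaining equivalences are the linear algebra of $\star_0$ together with the projector formula \eqref{eq:BprojectorsExplicit}. Your additional check that a nontrivial aligned field has $\Phi_1\neq0$ (since $\Bal=\RR B_q$ once $B_-=0$) and your caveat that the repeated principal null directions are distinguished only where $\Psi_2\neq0$ are sound refinements that the paper leaves implicit.
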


The requirement for the field to be non-null is, modulo an overall nonzero conventional constant,
\begin{equation}
 \Bal^2+\PP\QQ  B_-^2\neq0.
 \label{eq:nonnull}
\end{equation}

\subsection{Hamiltonian anisotropy, acceleration, and hidden symmetry}
\label{sec:hamiltonian}

The trace-free Hamiltonian tensor $\Ham^\circ$ was written in the real Weyl coframe $(\dd\rho,\dd z)$ in \eqref{eq:HamiltonianTraceFree}.  The same two real components can be combined into one complex component in the principal Carter coframe $(\dd p,\dd q)$.  Using $B_\pm$ from \eqref{eq:Bpm}, and defining
\[
 A_\pm:=A_p\pm\ii A_q.
\]
Then a direct expansion of \eqref{eq:HamiltonianTensor} gives
\begin{equation}
 \frac12\left(\Ham_{pp}-\Ham_{qq}\right)
 +\ii\Ham_{pq}
 =
 A_+\bars{A_-}
 -B_+\bars{B_-}
 +\frac{\epss}{\alphao^2}(C_p+\ii C_q)^2.
 \label{eq:anisotropyfactor}
\end{equation}
This is not a new tensor.  It is the complex representation of the two independent real components of $\Ham^\circ$ in the principal coframe. Indeed,
\[
 A_+\bars{A_-}
 =
 |A_p|^2-|A_q|^2
 +2\ii\Rea(A_p\bars{A_q}),
\]
with the same identity for $B$, while $(C_p+\ii C_q)^2=C_p^2-C_q^2+2\ii C_pC_q$.  Substitution into \eqref{eq:HamiltonianTensor} proves \eqref{eq:anisotropyfactor}. The three terms are the gravitational, electromagnetic, and dilatonic parts of the same Hamiltonian anisotropy.

For the conformal Carter metric \eqref{eq:Cartermetric}, the gravitational 1-form separates into a Carter contribution and a conformal contribution.  Since $f=\Om^{-2}f_{\rm C}$, with $f_{\rm C}=\RR/\Sig$, and $\Theta/(2f)$ is unchanged by the conformal factor, the definition of $A$ gives
\begin{equation}
\begin{aligned}
     &A=A_{\rm C}-\dd\ln\Om,
     \\
     &A_{\rm C}
     =
     \left[
     -\frac{\PP'}{2\RR}-\frac p\Sig
     -\ii\left(
     \frac{\QQ'}{2\RR}-\frac q\Sig
     \right)
     \right](\dd p+\ii\dd q).
\end{aligned}
 \label{eq:AConformalSplit}
\end{equation}
Because $\star_0(\dd p+\ii\dd q)=-\ii(\dd p+\ii\dd q)$, the Carter part has only one principal chirality:
\[
 \star_0A_{\rm C}=-\ii A_{\rm C},
 \qquad
 \Pi_+A_{\rm C}=0,
 \qquad
 \Pi_+A=-\Pi_+\dd\ln\Om.
\]
Thus the opposite chirality of $A$ comes only from the conformal factor.

For $\Om=1-\acc pq$, the component of $A_+$ is
\begin{equation}
 A_+=A_p+\ii A_q
 =\frac{\acc(q+\ii p)}{\Om}, \quad \Rightarrow
 \quad
 \frac{\Om^2}{\Sig}|A_+|^2=\acc^2.
 \label{eq:accdefect}
\end{equation}
The second equality follows from $\Sig=(q+\ii p)(q-\ii p)$.  Moreover, \( \Pi_+A=\frac12A_+(\dd p-\ii\dd q)\), and therefore, on a regular principal domain,
{\small
\[
 \Pi_+A=0
 \quad\Longleftrightarrow\quad
 A_+=0
 \quad\Longleftrightarrow\quad
 \acc=0
 \quad\Longleftrightarrow\quad
 \dd\Om=0.
\]
}
Thus \(A_+\) extracts the acceleration parameter.

The same decomposition also admits an interpretation in target space.  For fixed values of $\PP$ and $\QQ$, choose $\Om = 1$ and label the associated 1-forms by $A_{\rm C}$ and $B_{\rm C}$.  Because $\psi$ and $\chi$ are independent of $\Om$, whereas $f = \Om^{-2} f_{\rm C}$, it follows that
\[
 B=\Om B_{\rm C},
 \qquad
 A_{{\rm C},q}=\ii A_{{\rm C},p},
 \qquad
 B_{{\rm C},p}=\ii B_{{\rm C},q}.
\]
Let $\Ham^{\rm C}$ be the Hamiltonian tensor \eqref{eq:HamiltonianTensor} obtained from $(A_{\rm C},B_{\rm C},C=0)$. The before mentioned relations imply
\[
 |A_{{\rm C},p}|=|A_{{\rm C},q}|,
 \qquad
 \Rea(A_{{\rm C},p}\bars{A_{{\rm C},q}})=0,
\]
and the same identities hold for $B_{\rm C}$. Therefore we obtain
\begin{equation}
 \Ham^{{\rm C}}_{pq}=0,
 \qquad
 \Ham^{{\rm C}}_{pp}=\Ham^{{\rm C}}_{qq}.
 \label{eq:PDcoreIsothermal}
\end{equation}
Equivalently,
\[
 \Ham^{\rm C}
 =
 \Ham^{{\rm C}}_{pp}
 \left(
 \dd p\otimes\dd p+\dd q\otimes\dd q
 \right).
\]
Thus, at every point where the induced target metric is non-degenerate, $(p,q)$ serve as isothermal coordinates on the surface swept out by the Carter core in potential space.

\paragraph{The conformal Carter metric admits the off-shell conformal Killing-Yano (KY) two-form}
\begin{equation}
 y=\Om^{-3}\left[
 p\,\dd q\wedge(\dd t-p^2\dd\varphi)
 +q\,\dd p\wedge(\dd t+q^2\dd\varphi)
 \right].
 \label{eq:CKY}
\end{equation}
For non-degenerate principal eigenvalues, $y$ becomes a Killing-Yano tensor precisely when $\Om$ is constant.  Since $\dd\ln\Om$ is real, $\Pi_+\dd\ln\Om=0$ implies $\dd\Om=0$.  Consequently,
{\small
\begin{equation}
 \Pi_+A=0
 \quad\Longleftrightarrow\quad
 \dd\Om=0
 \quad\Longleftrightarrow\quad
 y\ \text{is a K-Y tensor}.
 \label{eq:AKY}
\end{equation}
}

The existence of $y$ depends only on the metric and does not require an aligned Maxwell field. Electromagnetic alignment instead requires the Maxwell tensor to preserve the two principal planes defined by $y$. Using \eqref{eq:alignmentchirality}, one obtains
{\footnotesize
\begin{equation}
 F_{\lambda[\mu}y_{\nu]}{}^\lambda=0
 \Longleftrightarrow
 \Phi_0=\Phi_2=0
 \Longleftrightarrow
 B_-=0
 \Longleftrightarrow
 \Pi_-B=0.
 \label{eq:CKYcompat}
\end{equation}
}
Thus $A_+$ measures the acceleration contribution, while $B_-$ measures how the Maxwell field fails to align with the principal geometry.

\subsection{Axis rods and conical singularities in potential variables}
\label{sec:axisrods}

The Weyl boundary is fixed by \eqref{eq:PDrhoDerived} (\(\rho^2=\frac{\PP\QQ}{\Om^4}\)), and is therefore encountered whenever either \(\PP\) or \(\QQ\) vanishes.  These two cases, however, have distinct local geometric interpretations.  A zero of \(\PP\) produces a spacelike degeneration of the Killing block, giving rise to an axis rod, whereas a zero of \(\QQ\) leads to a null degeneration and instead corresponds to a horizon rod.

Let $p_i$ be a simple root of $\PP$, so that $\PP(p_i)=0$ and $\PP'(p_i)\neq 0$. At $p = p_i$, the part of the metric involving the Killing coordinates reduces to a term proportional to $(\dd t - p_i^2 \dd\varphi)^2$. The Killing vector whose orbit shrinks to zero size on this rod must therefore lie in the kernel of this quadratic form. If we write a generic Killing vector as $X = a\,\partial_t + b\,\partial_\varphi$, the requirement $(\dd t - p_i^2 \dd\varphi)(X) = 0$ implies $a = p_i^2 b$, and thus, up to an inconsequential overall normalization, we obtain \( v_i = \partial_\varphi + p_i^2 \partial_t \). Consequently, $g(v_i, v_i)$ represents the squared proper length per unit parameter along the Killing orbit that collapses as one approaches the axis.

To study the local geometry around the rod, set $p=p_i+\epsilon$ with $|\epsilon|\ll1$ and keep $q$ fixed.  Since the root is simple, $\PP=\PP'(p_i)\epsilon+O(\epsilon^2)$.  Moreover, $(\dd t-p^2\dd\varphi)(v_i)=p_i^2-p^2=O(\epsilon)$, so the term containing $\QQ$ contributes only at order $\epsilon^2$.  The leading norm of the shrinking Killing direction is therefore
\[
     g(v_i,v_i)
     =
     \Om_i^{-2}\Sig_i\,\PP'(p_i)\epsilon
     +O(\epsilon^2),
\]
where $\Om_i=\Om(p_i,q)$ and $\Sig_i=\Sig(p_i,q)$.

The coordinate distance $|\epsilon|=|p-p_i|$ is not the physical distance from the axis.  Since the rod is approached by varying $p$ while keeping $t$, $\varphi$, and $q$ fixed, the transverse line element is determined only by the $pp$ component of the metric, namely $g_{pp}=\Sig/(\Om^2\PP)$.  We therefore define the proper radial distance from the rod by
\[
     R:=\int_{p_i}^{p}\sqrt{|g_{pp}|}\,\dd p.
\]
Using the expansion of $\PP$ near $p_i$ gives
\[
     R
     =
     2\Om_i^{-1}
     \sqrt{\frac{\Sig_i|\epsilon|}{|\PP'(p_i)|}}
     +O(|\epsilon|^{3/2}).
\]
Thus $R$ is the physical radius of a small circle surrounding the axis in the two-dimensional plane transverse to the rod.

Eliminating $\epsilon$ between the previous expressions yields
\[
     g(v_i,v_i)
     =
     \frac{|\PP'(p_i)|^2}{4}R^2+O(R^4).
\]
If $\lambda$ parametrizes the integral curves of $v_i$, so that $\dd x^\mu/\dd\lambda=v_i^\mu$, then the length of an orbit at fixed, small $R$ is
\[
     C=\Delta_i\sqrt{g(v_i,v_i)},
\]
where $\Delta_i$ is the period associated with $\lambda$. Thus, in the vicinity of the rod,
\[
     \frac{C}{R}
     \longrightarrow
     \frac{\Delta_i|\PP'(p_i)|}{2}.
\]
For the axis to be regular, the local geometry must reproduce that of the Euclidean polar plane, in which the ratio of circumference to radius approaches $2\pi$. The condition of elementary flatness therefore implies
\[
     \Delta_i=\frac{4\pi}{|\PP'(p_i)|}.
\]
Equivalently, if we define a local angular coordinate by
$\theta=\tfrac12|\PP'(p_i)|\lambda$, the transverse metric takes the form
\[
     \dd s_\perp^2
     =
     \dd R^2+R^2\dd\theta^2+O(R^4),
\]
with $\theta$ having period $2\pi$. Note that the factors $\Om$ and $\Sig$ cancel between the orbit circumference and the proper radial distance. As a result, they do not appear explicitly in the local elementary-flatness condition, even though the acceleration still affects the locations and derivatives of the roots of $\PP$ via the coefficients of the Pleba\'nski–Demiański polynomial.

The zeros of $\QQ$ require a separate analysis. Let $q_j$ be a simple root, so that $\QQ(q_j)=0$. At the corresponding boundary component, the degenerating Killing direction follows from the kernel of $(\dd t+q_j^2\dd\varphi)^2$, yielding \(w_j=\partial_\varphi-q_j^2\partial_t\). Introducing $q=q_j+\eta$, the norm of this vector behaves as
\[
     g(w_j,w_j)
     =
     -\Om_j^{-2}\Sig_j\,\QQ'(q_j)\eta
     +O(\eta^2),
\]
while the transverse metric component is
$g_{qq}\simeq\Sig_j/[\Om_j^2\QQ'(q_j)\eta]$. Thus, the direction orthogonal to the rod and the degenerating Killing direction necessarily have opposite causal character. Proceeding exactly as before, we define the proper transverse coordinate
\[
     R:=\int_{q_j}^{q}\sqrt{|g_{qq}|}\,\dd q
     =
     2\Om_j^{-1}
     \sqrt{\frac{\Sig_j|\eta|}{|\QQ'(q_j)|}}
     +O(|\eta|^{3/2}),
\]
so that the local two-dimensional metric takes the form
\[
     \dd s_\perp^2
     =
     \dd R^2
     -
     \frac{|\QQ'(q_j)|^2}{4}R^2\dd\lambda^2
     +O(R^4).
\]
This is the Rindler metric in Lorentzian signature, not a Euclidean polar metric. As a result, $q=q_j$ represents a horizon rod instead of an axis rod, and no Lorentzian conical-period condition is imposed on $w_j$. The familiar period $4\pi/|\QQ'(q_j)|$ appears only after Wick rotation to Euclidean signature, where it is used to remove the conical singularity at the Euclideanized horizon.

The same axis-regularity condition can be written directly in terms of the potentials.  Let $X$ be a spacelike axial Killing field with period $\Delta_X$, and let $k_I$ denote the limiting value of the Weyl function $k$ on a regular axis segment $I$.  For small $\rho$, the length of the shrinking axial orbit behaves as $\sqrt{g(X,X)}\simeq\rho/\sqrt f$, while the proper radial distance from the Weyl base metric is $R\simeq\e^{k_I}\rho/\sqrt f$.  The short orbit therefore has circumference $C\simeq\Delta_X\rho/\sqrt f$, giving
\[
     \lim_{\rho\rightarrow0}\frac{C}{R}
     =
     \Delta_X\e^{-k_I}.
\]
Hence, the conical deficit or excess along the rod is
\[
     \delta_I
     =
     2\pi-\Delta_X\e^{-k_I}.
\]

Moreover, the geometric Hamiltonian identity
\[
     \dd k
     =
     \rho\,\Ham^\circ(\nabla\rho,\cdot)
\]
makes explicit how this conical information is encoded in the 1-forms \(A,B,C\).  If $\Ham^\circ$ remains bounded, then the pullback of $\dd k$ to $\rho=0$ vanishes, so $k$ is constant along each connected regular axis segment.  For two such segments $I$ and $J$, connected by a curve $\Gamma$ in the regular region of the Weyl base,
\[
     k_J-k_I
     =
     \int_\Gamma
     \rho\,\Ham^\circ(\nabla\rho,\cdot).
\]
If both rods are associated with the same globally normalized axial generator, their conical data are related by
\[
     \frac{2\pi-\delta_J}{2\pi-\delta_I}
     =
     \exp\!\left[
     -\int_\Gamma
     \rho\,\Ham^\circ(\nabla\rho,\cdot)
     \right].
\]
Thus, the local study of the zeros of $\PP$ determines which Killing orbit degenerates and whether the corresponding axis is locally flat, while the potential description encodes the relative conical mismatch between different axis components in terms of the path integral of the trace-free Hamiltonian tensor, and hence directly through the 1-forms $(A,B,C)$.

\section{Dilatonic reconstruction and conditional rigidity}
\label{sec:rigidityMain}

\subsection{Carter reconstruction in the dilatonic potential formulation}
\label{sec:OmegaEMD}

The field equations \eqref{eq:Aeq}-\eqref{eq:Ceq} fix the 1-forms $(A,B,C)$, and \eqref{eq:kHamiltonianGeometric} is then used to reconstruct the remaining metric function $k$.  If, moreover, the spacetime is assumed to lie in the conformal Carter class \eqref{eq:Cartermetric}, the same Einstein equations also restrict the conformal factor $\Om$.  These restrictions do not constitute new field equations, instead, they represent the principal-coordinate form of the Einstein reconstruction once the Carter geometric structure has been imposed.

For $\alphao\neq0$, the Einstein equations in the conventions used here is
\begin{equation}
     R_{\mu\nu}
     =
     2\epss\,\phi_{,\mu}\phi_{,\nu}
     +2\kappa^2
     \left(
     F_{\mu\lambda}F_\nu{}^\lambda
     -\frac14g_{\mu\nu}F_{\alpha\beta}F^{\alpha\beta}
     \right).
     \label{eq:EinsteinEMD}
\end{equation}
The scalar part takes a particularly simple form along the principal $(p,q)$ directions: the two diagonal components contain $C_p^2$ and $C_q^2$, while the off-diagonal component involves $C_pC_q$. The electromagnetic part is expressed through the non-aligned 1-form $B_-$ defined earlier, the explicit factor $\kappa^2$ in the Maxwell stress tensor precisely cancels the factor $\kappa^{-2}$ that appears in the reconstruction of the Maxwell scalars.

\begin{proposition}[Conformal-factor reconstruction]
For a conformal Carter metric, the independent Einstein equations that reconstruct $\Om$ are
{\footnotesize
\begin{subequations}
    \begin{align}
     \Om_{,qq}
     &=
     \frac{\Om\PP}{\RR}|B_-|^2
     +\frac{\epss\Om}{\alphao^2}C_q^2,
     \label{eq:OmegaqqEMD}\\
     \Om_{,pp}
     &=
     -\frac{\Om\QQ}{\RR}|B_-|^2
     +\frac{\epss\Om}{\alphao^2}C_p^2,
     \label{eq:OmegappEMD}\\
     \ii\Sig\Om_{,pq}
     -(q+\ii p)(\Om_{,q}+\ii\Om_{,p})
     &=
     -\frac{\ii\Om\Sig}{\RR}B_-\bars{\Bal}
     +\frac{\ii\epss\Om\Sig}{\alphao^2}C_pC_q .
     \label{eq:OmegamixedEMD}
    \end{align}
\end{subequations}
}
\end{proposition}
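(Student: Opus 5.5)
The plan is to compute the Ricci tensor of the conformal Carter metric \eqref{eq:Cartermetric} in the orthonormal coframe \eqref{eq:coframe}. I will use the standard conformal transformation law for $g=\Om^{-2}\tilde g$, where $\tilde g$ is the Carter metric with $\Om=1$:
\[
R_{\mu\nu}=\tilde R_{\mu\nu}+2\Om^{-1}\tilde\nabla_\mu\tilde\nabla_\nu\Om+\tilde g_{\mu\nu}\bigl(\Om^{-1}\tilde\Box\Om-3\Om^{-2}|\tilde\nabla\Om|^2\bigr).
\]
The pure Carter Ricci tensor $\tilde R_{\mu\nu}$ is already structured by separability. For arbitrary $\PP,\QQ$, its components along the $(p,q)$ block satisfy
\[
\tilde R_{11}+\tilde R_{00}=0,\qquad \tilde R_{22}-\tilde R_{33}=0,\qquad \tilde R_{12}=0 .
\]
These are the standard Kerr–NUT type identities: the Carter core is "aligned", in the sense that its trace-free Ricci tensor has only $\Phi_{11}$. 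So I will take the three combinations of the Einstein equations \eqref{eq:EinsteinEMD} that isolate the non-aligned trace-free Ricci components $\Phi_{00}=\Phi_{22}$, $\Phi_{02}$ (equivalently $\Phi_{20}$), and $\Phi_{01},\Phi_{12}$ evaluated in the principal tetrad \eqref{eq:NPtetrad}. In each of these, $\tilde R$ drops out, and only second derivatives of $\Om$ survive on the geometric side.

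Concretely, I will compute $\tilde\nabla\tilde\nabla\Om$ in the $(p,q)$ block. The $\tilde g_{\mu\nu}$ term cancels in the trace-free combinations. The Christoffel symbols of $\Sig(\dd p^2/\PP+\dd q^2/\QQ)$ together with the fibre contributions produce the first-derivative terms. I expect the $qq$ and $pp$ combinations (namely $R_{00}+R_{11}$ and $R_{22}-R_{33}$, after multiplying by $\QQ/\Sig$ or $\PP/\Sig$) to reduce to $2\Om^{-1}\Om_{,qq}$ and $2\Om^{-1}\Om_{,pp}$ times frame factors. The connection terms should cancel there because $\Om$ enters only via the derivative along a null direction whose geodesic/shear-free properties kill them. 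The mixed component $R_{12}$ will instead retain the first-order piece $-(q+\ii p)(\Om_{,q}+\ii\Om_{,p})/\Sig$ once it is packaged with the complex combination. That term comes from $\tilde\Gamma^p{}_{pq}=q/\Sig$ and $\tilde\Gamma^q{}_{pq}=p/\Sig$.

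On the matter side, the scalar stress $2\epss\phi_{,\mu}\phi_{,\nu}$ with $\phi_{,\mu}=C_\mu/\alphao$ contributes $2\epss C_q^2/\alphao^2$ and $2\epss C_p^2/\alphao^2$ (in coordinate components) and $2\epss C_pC_q/\alphao^2$, which gives the $\epss\Om C^2/\alphao^2$ terms after multiplying by $\Om/2$. The Maxwell part gives $\Phi_{00}\propto|\Phi_0|^2$ and $\Phi_{01}\propto\Phi_0\bars\Phi_1$. Inserting \eqref{eq:Phi0B}–\eqref{eq:Phi1B}, the $\kappa^2$ of the stress tensor cancels the $\kappa^{-2}$, and $f\Om^4\PP\QQ/\RR^2$ combines with the frame factors $\Om^2\Sig/\QQ$ (respectively $\Om^2\Sig/\PP$) and $f=\RR/(\Om^2\Sig)$. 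This should produce exactly $\Om\PP|B_-|^2/\RR$ and $-\Om\QQ|B_-|^2/\RR$. The sign difference arises because $e_1$ is timelike-paired with $e_0$, while $e_2$ pairs with the spacelike $e_3$. The mixed equation similarly yields $B_-\bars\Bal$.

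The main obstacle is bookkeeping of signs and normalizations in the mixed equation \eqref{eq:OmegamixedEMD}: matching which complex combination of $R_{12}$ and $\Phi_{01}$-type components gives the stated left side with the factor $\ii\Sig$. I would verify it first on the known PD data. For $B_-=0$ and $C=0$, both sides must vanish for $\Om=1-\acc pq$; indeed $\ii\Sig(-\acc)-(q+\ii p)(-\acc p-\ii\acc q)=0$. This check fixes the relative sign. I would then confirm the Maxwell prefactor through the coefficient check deferred to the appendix. Finally, independence follows since the remaining Einstein components only determine $\PP,\QQ$ and do not involve $\Om_{,ij}$ in these trace-free combinations.
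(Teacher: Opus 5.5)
Your route is the paper's (Appendix~B): evaluate $\Phi_{00}$, $\Phi_{02}$ and $\Phi_{01}$ in the principal tetrad, where the Carter core drops out, and equate them with the scalar and Maxwell stresses through \eqref{eq:Phi0B}--\eqref{eq:Phi1B}. Your conformal-transformation formula is a good way to produce the geometric side, which the paper only states. The two diagonal equations come out as you predict.

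However, the cancellation of first-derivative terms in $R_{00}+R_{11}$ does not come from geodesic or shear-free properties. It comes from the connection terms of $\tilde\nabla\tilde\nabla\Om(\tilde e_1,\tilde e_1)$ cancelling against the fibre Hessian $\tilde\nabla_a\tilde\nabla_b\Om=\tfrac12\tilde g^{ij}\Om_{,i}\partial_j\tilde g_{ab}$ (with $a,b\in\{t,\varphi\}$), evaluated on $\tilde e_0=(q^2\partial_t-\partial_\varphi)/\sqrt{\Sig\QQ}$. The same mechanism operates in $R_{22}-R_{33}$, with $\tilde e_3=(p^2\partial_t+\partial_\varphi)/\sqrt{\Sig\PP}$.

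The genuine gap is in the mixed equation. $R_{12}$ is a single real component. With $\tilde\Gamma^p{}_{pq}=q/\Sig$ and $\tilde\Gamma^q{}_{pq}=p/\Sig$ it yields only $\Sig\Om_{,pq}-q\Om_{,p}-p\Om_{,q}$, which is the imaginary part of the left side of \eqref{eq:OmegamixedEMD}. No packaging of $R_{12}$ produces the real part $p\Om_{,p}-q\Om_{,q}$.

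That real part comes from the Killing-block component $R_{03}$. By circularity $R_{02}=R_{13}=0$, so $\Phi_{01}=\tfrac14(R_{03}+\ii R_{12})$. The fibre Hessian gives $\tilde\nabla\tilde\nabla\Om(\tilde e_0,\tilde e_3)=\sqrt{\PP\QQ}\,(p\Om_{,p}-q\Om_{,q})/\Sig^2$. Hence $R_{03}=2\Om\sqrt{\PP\QQ}\,(p\Om_{,p}-q\Om_{,q})/\Sig^2$, provided also $\tilde R_{03}=0$ for the Carter core. That identity is missing from your list of Carter identities. Combined with your $R_{12}$, this reproduces \eqref{eq:R01geom} exactly.

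As planned, your derivation therefore omits one of the two real equations in \eqref{eq:OmegamixedEMD}, namely $p\Om_{,p}-q\Om_{,q}=\Om\Sig\operatorname{Im}(B_-\bars{\Bal})/\RR$. Your PD check cannot detect this, because $p\Om_{,p}-q\Om_{,q}\equiv0$ for every $\Om$ that depends only on $pq$, including $1-\acc pq$.

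Finally, your independence remark is inaccurate. $\Phi_{11}$ and the trace do contain $\Om_{,pp}$ and $\Om_{,qq}$, through $R_{00}-R_{11}$, $R_{22}+R_{33}$ and $\tilde\Box\Om$. What singles out the three displayed equations is only that the Carter core drops out of them.
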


The derivation from the relevant components of \eqref{eq:EinsteinEMD} is presented in Appendi \cref{app:OmegaDerivation}.  It is already instructive to interpret these relations at this point.  The first two equations specify the curvature of $\Om$ along each principal coordinate: the misaligned electromagnetic 1-form $B_-$ and the 1-form $C$ act exactly as the sources for these second derivatives.  The third equation links the two principal directions and constrains their mutual consistency.

On the electrovacuum boundary, where $C=0$, these equations reduce to the standard Einstein–Maxwell reconstruction.  When the Maxwell field is aligned as well, so that $B_-=0$, the diagonal components immediately imply \(\Om_{,pp}=\Om_{,qq}=0\), consequently, $\Om$ must be an affine function of each coordinate separately, \(\Om=c_0+c_1p+c_2q+c_3pq\). Inserting this expression into the mixed equation \eqref{eq:OmegamixedEMD} the linear coefficients to vanish, giving $c_1=c_2=0$.  After performing the conventional normalization of the constant term and the $pq$ coefficient, one obtains
\[
 \Om=1-\acc pq,
\]
which is exactly the Plebański–Demiański conformal factor.  In the EMD case, however, the terms involving $C_p$ and $C_q$ are not required to vanish, so a nontrivial scalar field can, in principle, sustain more general conformal factors.

The reconstruction equations are only necessary conditions.  Any candidate obtained from them must still satisfy the full potential system \eqref{eq:Aeq}-\eqref{eq:Ceq}.  This distinction will be important below: the Einstein reconstruction restricts the admissible geometry, while the potential equations decide whether that geometry is supported by an actual EMD solution.

\subsection{Product-separable reconstruction}
\label{sec:product}

Before fixing the Pleba\'nski--Demia\'nski form of $\Om$, it is useful to consider a broader separable class.  We restrict here to the ordinary dilatonic sign $\epss=+1$ and assume locally
\[
     \Om=\Om_p(p)\Om_q(q),
     \qquad
     C=C_p(p)\dd p+C_q(q)\dd q.
\]
The subscripts on $\Om_p$ and $\Om_q$ denote their arguments and not differentiation.  The same is true for the scalar components: each depends only on its corresponding principal coordinate. Consequently, $\dd C=0$ is automatic, as required locally for the scalar field.

\begin{proposition}[Product reduction]
In a connected region where
\[
     \PP>0,\qquad \QQ>0,\qquad \RR>0,
\]
the two diagonal reconstruction equations are equivalent to the existence of a real constant $\ell^2\geq0$ such that
\[
     \PP\left(
     \frac{\Om_p''}{\Om_p}-\frac{C_p^2}{\alphao^2}
     \right)
     =-\ell^2,
     \qquad
     \QQ\left(
     \frac{\Om_q''}{\Om_q}-\frac{C_q^2}{\alphao^2}
     \right)
     =\ell^2,
\]
and
\[
     |B_-|^2=\frac{\ell^2\RR}{\PP\QQ}.
\]
Once these relations hold, the mixed reconstruction equation fixes the product between the two relevant complex 1-forms as
\[
     B_-\bars{\Bal}
     =
     \frac{\ii\RR}{\Sig}
     \left[
     \ii\Sig\frac{\Om_{,pq}}{\Om}
     -(q+\ii p)
     \frac{\Om_{,q}+\ii\Om_{,p}}{\Om}
     -\frac{\ii\Sig}{\alphao^2}C_pC_q
     \right].
\]
\end{proposition}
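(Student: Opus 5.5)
Under the product ansatz I would begin by dividing the two diagonal reconstruction equations \eqref{eq:OmegaqqEMD}--\eqref{eq:OmegappEMD} by $\Om=\Om_p\Om_q$; this is legitimate in the regular domain where $\Om\neq0$. Set $\epss=+1$ and use $\Om_{,pp}/\Om=\Om_p''/\Om_p$ and $\Om_{,qq}/\Om=\Om_q''/\Om_q$. The two equations then become
\[
 \frac{\Om_q''}{\Om_q}-\frac{C_q^2}{\alphao^2}=\frac{\PP}{\RR}|B_-|^2,
 \qquad
 \frac{\Om_p''}{\Om_p}-\frac{C_p^2}{\alphao^2}=-\frac{\QQ}{\RR}|B_-|^2 .
\]
In each equation the left-hand side depends on one variable only. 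Multiply the first by $\QQ$ and the second by $\PP$, and define $L_q(q):=\QQ\,(\Om_q''/\Om_q-C_q^2/\alphao^2)$ and $L_p(p):=\PP\,(\Om_p''/\Om_p-C_p^2/\alphao^2)$. Both equations then reduce to the single identity
\[
 L_q(q)=-L_p(p)=\frac{\PP\QQ}{\RR}|B_-|^2 .
\]
A function of $q$ alone that equals a function of $p$ alone on a connected open set is constant; I call the constant $\ell^2$. This gives the two separated equations of the statement, and the third relation $|B_-|^2=\ell^2\RR/(\PP\QQ)$ follows directly. The sign $\ell^2\ge0$ comes from $|B_-|^2\ge0$ together with $\PP,\QQ,\RR>0$; this is the only place the positivity hypotheses enter, and I would say so explicitly.

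For the converse, assume the two separated equations and the expression for $|B_-|^2$. Substituting them back into \eqref{eq:OmegaqqEMD}--\eqref{eq:OmegappEMD} is a one-line check: $\Om\PP|B_-|^2/\RR=\Om\ell^2/\QQ=\Om_{,qq}-\Om C_q^2/\alphao^2$, and the $p$-equation works the same way. This establishes the claimed equivalence.

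For the mixed equation, divide \eqref{eq:OmegamixedEMD} by $-\ii\Om\Sig/\RR$, which is nonzero, and solve for $B_-\bars{\Bal}$. Using $1/(-\ii)=\ii$, I get
\[
 B_-\bars{\Bal}=\frac{\ii\RR}{\Om\Sig}\Bigl[\ii\Sig\Om_{,pq}-(q+\ii p)(\Om_{,q}+\ii\Om_{,p})\Bigr]-\frac{\ii\RR}{\Sig}\frac{\ii\Sig}{\alphao^2}C_pC_q ,
\]
which is exactly the displayed formula once $1/\Om$ is distributed. The only care needed is in this factor-of-$\ii$ bookkeeping. There is no real obstacle here: the statement only rearranges the mixed equation and does not claim it is automatically satisfied. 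Its compatibility with $|B_-|^2$, namely $|B_-\bars\Bal|^2=|B_-|^2|\Bal|^2$, is left as a constraint on $\Bal$ for the analysis that follows.
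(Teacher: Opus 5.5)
Your proposal is correct and follows the paper's proof essentially step by step. Both divide the diagonal equations by $\Om=\Om_p\Om_q$, multiply by $\QQ$ and $\PP$, separate variables on the connected domain to obtain the constant $\ell^2\ge0$ from positivity, and solve the mixed equation for $B_-\bars{\Bal}$. Your explicit check of the converse direction and of the factor-of-$\ii$ bookkeeping is a small but welcome addition, since the paper asserts the equivalence but writes out only the forward implication.
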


\begin{proof}
When $\Om$ is expressed in product form, dividing the two diagonal reconstruction equations by $\Om$ yields
\[
     \frac{\Om_q''}{\Om_q}-\frac{C_q^2}{\alphao^2}
     =
     \frac{\PP}{\RR}|B_-|^2,
     \qquad
     \frac{\Om_p''}{\Om_p}-\frac{C_p^2}{\alphao^2}
     =
     -\frac{\QQ}{\RR}|B_-|^2.
\]
Multiplying the first equation by $\QQ(q)$ and the second by $\PP(p)$ gives
\[
     \QQ
     \left(
     \frac{\Om_q''}{\Om_q}-\frac{C_q^2}{\alphao^2}
     \right)
     =
     \frac{\PP\QQ}{\RR}|B_-|^2,
\]
and
\[
     \PP
     \left(
     \frac{\Om_p''}{\Om_p}-\frac{C_p^2}{\alphao^2}
     \right)
     =
     -\frac{\PP\QQ}{\RR}|B_-|^2.
\]
In the first equation, the left-hand side is a function of $q$ alone, while in the second it is a function of $p$ alone, and on a connected two-dimensional domain they are everywhere negatives of each other. It follows that both sides must be constant. Furthermore, since $\PP$, $\QQ$, and $\RR$ are positive, the shared term $\PP\QQ|B_-|^2/\RR$ cannot be negative. We therefore write it as $\ell^2\geq0$, yielding the three relations stated earlier. Lastly, by solving the mixed reconstruction equation for $B_-\bars{\Bal}$, one obtains the final formula.
\end{proof}

The parameter $\ell$ thus quantifies the size of the non-aligned electromagnetic sector in this product-separable reconstruction: setting $\ell=0$ is the same as requiring $B_-=0$ on the region of interest. This simplification, however, does not fix the relative complex phase of the electromagnetic 1-forms. In particular, the $C$ equation continues to provide a single real constraint involving $B_p$ and $B_q$, while the complex equations for $A$ and $B$ still need to be enforced separately. The proposition should therefore be interpreted as a simplification of the Einstein reconstruction rather than a full solution to the EMD system.

\subsection{Dilatonic rigidity in a physical Carter region}
\label{sec:rigidity}

We now restrict attention to the region where the sign assumption supporting the rigidity results is well defined. We refer to a connected domain as a \emph{physical Carter region} when
\begin{equation}
     \Om>0,\quad
     \PP>0,\quad
     \QQ>0,\quad
     \RR=\QQ-\PP>0,\quad
     \epss=+1.
     \label{eq:physicalpatch}
\end{equation}
The conditions $\PP>0$ and $\QQ>0$ fix the Riemannian signature of the two-dimensional Carter base, while $\RR>0$ implies $f>0$ in the stationary region.  The choice $\epss=+1$ is equally important: it makes the scalar contribution to the diagonal reconstruction equations positive.  All statements below are local within such a connected region.

\subsubsection{The bilinear PD obstruction}

The first rigidity result follows directly from the special curvature properties of the bilinear PD conformal factor.

\begin{theorem}[Bilinear obstruction]\label{thm:bilinearObstruction}
Let \(\Om=1-\acc pq \),  in a physical Carter region.  Then every local EMD solution with $\alphao\neq0$ within this conformal Carter ansatz satisfies \( C=0, \quad B_-=0\), hence the dilatonic scalar is constant and the Maxwell field is aligned with the two Carter principal directions.
\end{theorem}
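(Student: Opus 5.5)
The proof uses only the two diagonal conformal-factor reconstruction equations \eqref{eq:OmegaqqEMD}--\eqref{eq:OmegappEMD}. For $\Om=1-\acc pq$ we have $\Om_{,pp}=\Om_{,qq}=0$ identically, so both left-hand sides vanish. Dividing by $\Om>0$ turns them into the pair of pointwise identities
\begin{equation*}
 0=\frac{\PP}{\RR}|B_-|^2+\frac{1}{\alphao^2}C_q^2,
 \qquad
 0=-\frac{\QQ}{\RR}|B_-|^2+\frac{1}{\alphao^2}C_p^2,
\end{equation*}
where we have already set $\epss=+1$.

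In the first identity every term is nonnegative, because $\PP>0$, $\RR>0$, $\alphao\neq0$ and $C_q$ is real (as $C=\alphao\,\dd\phi$ with $\phi$ real). A sum of nonnegative terms vanishes only if each term does, so $B_-=0$ and $C_q=0$ at every point. Substituting $B_-=0$ into the second identity leaves $C_p^2/\alphao^2=0$, hence $C_p=0$. Thus $C=0$, so $\dd\phi=0$ and the dilaton is constant on the connected region. The condition $B_-=0$ is exactly the alignment criterion \eqref{eq:alignmentchirality}, which gives $\Phi_0=\Phi_2=0$ by \eqref{eq:Phi0B}, and this is the asserted alignment with the principal directions.

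Only the sign assumptions of the physical Carter region \eqref{eq:physicalpatch} are needed: positivity of $\PP/\RR$ in the $qq$ equation, $\Om\neq0$ to divide, and $\epss=+1$ so that the scalar term has the same sign as the electromagnetic one. The mixed equation \eqref{eq:OmegamixedEMD} is then automatically consistent, since both its sides vanish for the bilinear $\Om$ once $B_-=C=0$, because $\Om_{,q}+\ii\Om_{,p}=-\acc(p+\ii q)$ and $\ii\Sig\Om_{,pq}=-\ii\acc\Sig$. We do not need to check this to prove the theorem.

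The main obstacle is not the algebra but the reliance on the reconstruction proposition. Its derivation in \cref{app:OmegaDerivation} must hold without assuming $C=0$, and the coefficient of $|B_-|^2$ in the $qq$ equation must really be $+\Om\PP/\RR$ with the $\kappa^2$ factors cancelling as claimed. I would take that proposition as given. I would also note explicitly that $C_p,C_q$ are real, since $C=-\dd\ln\kappa$ with $\kappa>0$. A phantom scalar ($\epss=-1$) or $\PP/\RR<0$ would break the sign argument, which is why those cases are excluded.
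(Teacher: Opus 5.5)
Your proposal is correct and matches the paper's proof: since $\Om_{,pp}=\Om_{,qq}=0$, the $qq$ reconstruction equation becomes a sum of nonnegative terms, which forces $B_-=0$ and $C_q=0$, and the $pp$ equation then forces $C_p=0$. Your side remark that the mixed equation holds identically is also right, because $\ii\Sig\Om_{,pq}-(q+\ii p)(\Om_{,q}+\ii\Om_{,p})=0$ for the bilinear factor, though the theorem does not need it.
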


\begin{proof}
For the bilinear factor $\Om=1-\acc pq$, \(\Om_{,pp}=\Om_{,qq}=0\), thnen the $qq$ reconstruction equation therefore becomes \( 0  =  \frac{\Om\PP}{\RR}|B_-|^2  +\frac{\Om}{\alphao^2}C_q^2\).  Every factor multiplying the two squares is positive in the physical region \eqref{eq:physicalpatch}.  Both terms on the right-hand side are consequently nonnegative, and their sum can vanish only if \( B_-=0,  \quad  C_q=0\). Substituting $B_-=0$ into the $pp$ equation gives \( 0=\frac{\Om}{\alphao^2}C_p^2\) and therefore $C_p=0$.  Hence $C=0$ identically on the connected region.
\end{proof}

The obstruction is thus a direct consequence of positivity.  The PD factor has vanishing pure second derivatives, whereas an ordinary scalar gradient and a non-aligned Maxwell field would contribute nonnegative sources to precisely those derivatives.  They cannot be present without changing the bilinear form of $\Om$.

The theorem still allows an aligned electromagnetic 1-form.  At nonzero dilatonic coupling, however, the scalar equation imposes an additional restriction.

\begin{corollary}[PD Coulombic-type 1-form at nonzero dilatonic coupling]
Under the hypotheses of the preceding theorem, the scalar equation requires \(\Rea(B\cdot B)=0\).  If $B$ is the aligned PD Coulombic-typw 1-form \eqref{eq:PDBcurrent} on an open two-dimensional region, this condition implies \(e=g=0\). 
\end{corollary}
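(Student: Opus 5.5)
The corollary will follow from combining Theorem~\ref{thm:bilinearObstruction} with the scalar field equation \eqref{eq:Ceq}, and then evaluating the resulting constraint on the explicit aligned 1-form \eqref{eq:PDBcurrent}. By Theorem~\ref{thm:bilinearObstruction}, $C=0$ on the physical Carter region, so $D(\rho C)=0$. Since $\alphao\neq0$, the right-hand side of \eqref{eq:Ceq} must then vanish. This is exactly the constant-dilaton constraint \eqref{eq:constantDilatonConstraint}, $\Rea(B\cdot B)=0$, which proves the first claim.

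For the second claim I will compute $B\cdot B$ in the principal chart using the contraction formula $X\cdot Y=\lambda^{-1}(\PP X_pY_p+\QQ X_qY_q)$ established in the closure subsection. For the aligned form \eqref{eq:PDBcomponents} write $B_q=b$ and $B_p=\ii b$, with $b=(e+\ii g)/[\sqrt f\,(q+\ii p)^2]$. Then
\[
B\cdot B=\frac{b^2(\QQ-\PP)}{\lambda}=\frac{\RR\,b^2}{\lambda}.
\]
Substituting $f=\RR/(\Om^2\Sig)$ gives
\[
B\cdot B=\frac{\Om^2\Sig}{\lambda}\,\frac{(e+\ii g)^2}{(q+\ii p)^4}
=\frac{\Om^2}{\lambda\Sig^3}\,(e+\ii g)^2(q-\ii p)^4,
\]
where the second form uses $(q+\ii p)^{-1}=(q-\ii p)/\Sig$.

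The step that needs care is checking that the real prefactor $\Om^2/(\lambda\Sig^3)$ is strictly positive on the region, so that it can be divided out. This holds because $\Om>0$, $\Sig>0$ away from the origin, and $\lambda>0$, since $\lambda$ is the conformal factor of the Riemannian metric \eqref{eq:lambdaConformal} on a regular patch where $\PP,\QQ>0$. The condition then becomes the polynomial identity $\Rea[w(q-\ii p)^4]=0$ on an open set, with $w:=(e+\ii g)^2$.

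Because this is a real polynomial vanishing on an open set, all its coefficients vanish. Reading off two of them suffices:
\begin{itemize}
\item the coefficient of $q^4$ is $\Rea w$;
\item the coefficient of $q^3p$ is $\Rea(-4\ii w)=4\,\mathrm{Im}\,w$.
\end{itemize}
Hence $w=0$, so $(e+\ii g)^2=0$. Since $e$ and $g$ are real, this gives $e=g=0$.
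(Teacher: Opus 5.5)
Your proposal is correct and follows the paper's proof essentially step for step. You use Theorem~\ref{thm:bilinearObstruction} to get $C=0$, which reduces \eqref{eq:Ceq} to $\Rea(B\cdot B)=0$; the principal-chart contraction then gives $B\cdot B=\Om^2\Sig(e+\ii g)^2/[\lambda(q+\ii p)^4]$, and the vanishing of two quartic coefficients forces $e=g=0$. The only cosmetic difference is that you read off the $q^4$ and $q^3p$ coefficients of $\Rea[(e+\ii g)^2(q-\ii p)^4]$, whereas the paper expands into the real polynomial and uses the $p^4$ and $p^3q$ coefficients.
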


\begin{proof}
Once the theorem has established $C=0$, the scalar equation \eqref{eq:Ceq} reduces to $\Rea(B\cdot B)=0$.  For the PD Coulombic 1-form we may write \(B_p=\ii b,  \quad B_q=b,  \quad  b=\frac{e+\ii g}{\sqrt f\,(q+\ii p)^2}\). Using \eqref{eq:Ceq} together with $f=\RR/(\Om^2\Sig)$ gives \(B\cdot B = \frac{\Om^2\Sig}{\lambda} \frac{(e+\ii g)^2}{(q+\ii p)^4}\). The prefactor $\Om^2\Sig/\lambda$ is real and nonzero in the region under consideration.  Thus $\Rea(B\cdot B)=0$ is equivalent, after multiplication by the nonvanishing real denominator, to the polynomial identity
\[
     (e^2-g^2)(p^4-6p^2q^2+q^4)
     -8eg\,pq(p^2-q^2)=0.
\]
Because this identity holds on an open subset of the $(p,q)$ plane, every polynomial coefficient must vanish.  The coefficient of $p^4$ gives $e^2=g^2$, while that of $p^3q$ gives $eg=0$.  The two conditions are compatible only for \( e=g=0\).
\end{proof}

Thus the obstruction has two logically distinct steps.  The Einstein reconstruction first forces the scalar 1-form to vanish and aligns the Maxwell field.  Once $C=0$, the scalar field equation remains nontrivial because $\alphao\neq0$ and excludes the nonzero PD Coulombic charge itself.

\begin{remark}
The hypotheses of the theorem are essential.  In particular, the positivity argument does not extend directly to a phantom scalar, $\epss=-1$, because the scalar term then enters the reconstruction equations with the opposite sign and may cancel the electromagnetic contribution.  The argument may also fail in regions where the signs of $\PP/\RR$ or $\QQ/\RR$ change, or when a scalar potential modifies the field equations.  Most importantly, the theorem concerns the bilinear factor $\Om=1-\acc pq$ and does not exclude dilatonic conformal Carter geometries with a different functional form of $\Om$.
\end{remark}
%

\subsubsection{Exact quadratic St\"ackel 1-forms}

The bilinear theorem naturally raises a second question: does the obstruction persist if the conformal factor is allowed to depart from the exact PD form?  To examine this without abandoning the Carter structure, we consider the lowest-degree polynomial extension containing both the bilinear PD term and the biquadratic combinations that naturally arise in conformal Carter geometries.  We therefore assume $\acc>0$ and write
\begin{equation}
     U:=\Om^2
     =
     1-2\acc pq+\zeta(q^2-p^2)
     +\mu p^2q+\nu pq^2+\delta p^2q^2 .
     \label{eq:Uansatz}
\end{equation}
For the dilatonic 1-form we take the corresponding quadratic St\"ackel form
{\footnotesize
\begin{equation}
     \frac{C}{\alphao}
     =
     \frac{v(q)}{U}\dd p+\frac{u(p)}{U}\dd q,
     \quad
     u=u_0+u_1p+u_2p^2,
     \quad
     v=v_0+v_1q+v_2q^2 .
     \label{eq:CStackel}
\end{equation}
}
This mixed dependence is deliberate: aside from the shared factor $U^{-1}$, the coefficient multiplying $\dd p$ is a function only of $q$, whereas the coefficient multiplying $\dd q$ is a function only of $p$. Because $C$ is locally proportional to the gradient of the dilaton field, it must likewise be exact. As we will demonstrate, this requirement already eliminates most of the polynomial deformation even before any field equation is applied.

\begin{lemma}[Exactness classification]\label{lem:exactnessClassification}
Let $U,u,v$ be real and given by \eqref{eq:Uansatz}-\eqref{eq:CStackel}.  If $C$ is exact and nonzero on an open set, then necessarily
\begin{equation}
     \mu=\nu=\zeta=0,
     \qquad
     U=1-2\acc pq+\delta p^2q^2 .
     \label{eq:Uclassified}
\end{equation}
For $\delta\neq\acc^2$, the 1-fomr reduces to
\[
     u=hp,
     \qquad
     v=hq,
     \qquad
     \frac{C}{\alphao}
     =
     h\,\frac{\dd(pq)}{U}.
\]
At the limiting value $\delta=\acc^2$, exactness admits instead the larger family
\[
     u=u_0+hp+\acc v_0p^2,
     \qquad
     v=v_0+hq+\acc u_0q^2.
\]
\end{lemma}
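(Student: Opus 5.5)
The plan is to reduce exactness to a single polynomial identity in $(p,q)$ and then compare coefficients. Since $C/\alphao=(v/U)\,\dd p+(u/U)\,\dd q$, we have
\[
 \dd\Big(\frac{C}{\alphao}\Big)=\Big[\partial_p\Big(\frac uU\Big)-\partial_q\Big(\frac vU\Big)\Big]\dd p\wedge\dd q .
\]
On the open set where $U\neq0$, $\dd C=0$ is therefore equivalent, after multiplication by $U^2$, to
\begin{equation*}
 (u'-\dot v)\,U=u\,U_{,p}-v\,U_{,q},
\end{equation*}
where $u'=u_1+2u_2p$ and $\dot v=v_1+2v_2q$. Both sides are polynomials in $(p,q)$ of bidegree at most $(3,3)$. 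Because the identity holds on an open set, every monomial coefficient must vanish. This gives a finite bilinear system in the unknowns $(u_i,v_i)$ and the deformation parameters $(\acc,\zeta,\mu,\nu,\delta)$. I would write $U_{,p}=-2\acc q-2\zeta p+2\mu pq+\nu q^2+2\delta pq^2$ and $U_{,q}=-2\acc p+2\zeta q+\mu p^2+2\nu pq+2\delta p^2q$, expand both sides, and tabulate the coefficients monomial by monomial.

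The top-degree monomials, such as $p^3q^2$ and $p^2q^3$, cancel identically because the $\delta$ terms appear symmetrically on both sides. So the information comes from the intermediate and low monomials. I would organize the equations by total degree.

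\begin{itemize}
\item Degree 0 and degree 1 relate $u_1$, $v_1$ and the constants $u_0$, $v_0$ to $\zeta$ and $\acc$.
\item The pure monomials $p^3$, $q^3$, $p^4$, $q^4$ and $p^3q$, $pq^3$ isolate the products $\mu u_i$, $\nu v_i$ and $\zeta u_2$, $\zeta v_2$.
\item The mixed monomials tie $u_0$ to $v_2$ and $v_0$ to $u_2$ through $\acc$ and $\delta$.
\end{itemize}

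The strategy is to show first that $\mu\neq0$ or $\nu\neq0$ forces all the $u_i$ and $v_i$ to vanish, contradicting $C\neq0$. The monomials $p^4$, $p^3q$, $q^4$, $pq^3$ should give $\mu u_2=\nu v_2=0$ and then cascade downward in degree. Next I would show the same for $\zeta\neq0$. With $\mu=\nu=0$, the $p^3$, $q^3$ coefficients should give $\zeta u_2=\zeta v_2=0$, and the lower equations then force $u$ and $v$ to be linked in a way that is incompatible unless $\zeta=0$ or $C=0$. This yields \eqref{eq:Uclassified}.

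With $U=1-2\acc pq+\delta p^2q^2$ fixed, the remaining equations are linear in $(u_0,u_1,u_2,v_0,v_1,v_2)$. From the low-degree coefficients I expect $u_1=v_1=:h$. The remaining pairs should satisfy a homogeneous $2\times2$ system of the form $u_2=\acc v_0$ and $\delta v_0=\acc u_2$, together with the mirror relations $v_2=\acc u_0$ and $\delta u_0=\acc v_2$. Eliminating gives $(\delta-\acc^2)v_0=0$ and $(\delta-\acc^2)u_0=0$. For $\delta\neq\acc^2$ this forces $u_0=v_0=u_2=v_2=0$, hence $u=hp$, $v=hq$, and $C/\alphao=h\,\dd(pq)/U$. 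At $\delta=\acc^2$ the determinant vanishes, and the two-parameter family $u_2=\acc v_0$, $v_2=\acc u_0$ survives, which is the stated larger family.

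The main obstacle is the case analysis that eliminates $\mu$, $\nu$ and $\zeta$. The coefficient equations are bilinear, so one must avoid dividing by quantities that might vanish. I would handle this by branching on whether $u_2$ and $v_2$ vanish and tracking the cascade carefully. As a safeguard, I would verify the final coefficient table symbolically.
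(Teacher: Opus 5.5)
Your reduction to $(u'-\dot v)U=u\,U_{,p}-v\,U_{,q}$ is correct and matches the paper, and so is your treatment of the final case $\mu=\nu=\zeta=0$. There the relations $u_2=\acc v_0$, $\acc u_2=\delta v_0$ and their mirrors give $(\delta-\acc^2)u_0=(\delta-\acc^2)v_0=0$.

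\textbf{The gap.} It lies in the step you flag as the main obstacle: the coefficient equations you plan to use for eliminating $\mu,\nu,\zeta$ are vacuous.

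\begin{itemize}
\item Because $u$ and $U$ are both quadratic in $p$, the top $p$-power of $u'U-uU_{,p}$ cancels (a Wronskian-type cancellation). The same happens to the top $q$-power of $\dot vU-vU_{,q}$.
\item Hence the identity has bidegree at most $(2,2)$. The monomials $p^4$ and $q^4$ never occur.
\item The coefficients of $p^3$, $q^3$, $p^3q$, $pq^3$ are $-2\zeta u_2$, $-2\zeta v_2$, $2\mu u_2$, $-2\nu v_2$ on \emph{both} sides, so they cancel identically.
\end{itemize}

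So there is no relation $\mu u_2=\nu v_2=0$ or $\zeta u_2=\zeta v_2=0$, and the downward cascade you describe cannot start. All the information sits in nine monomials:
\begin{itemize}
\item from $1$: $u_1=v_1=h$;
\item from $p$, $q$: $u_2=\acc v_0-\zeta u_0$ and $v_2=\acc u_0+\zeta v_0$;
\item from $pq$, $p^2$, $q^2$: $\mu u_0=\nu v_0$, $2\zeta h+\mu v_0=0$, $2\zeta h-\nu u_0=0$;
\item from $p^2q^2$: $\nu u_2=\mu v_2$;
\item from $p^2q$ and $pq^2$:
\[
 \delta v_0-\acc u_2+\zeta v_2=\tfrac12\mu h,
 \qquad
 \delta u_0-\acc v_2-\zeta u_2=\tfrac12\nu h .
\]
\end{itemize}

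\textbf{What works instead.} The argument runs upward from low degree, which is essentially the paper's route after normalizing $\acc=1$.

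\emph{Case $(\mu,\nu)\neq(0,0)$.} The three degree-two equations form a linear system in $(u_0,v_0,h)$ with determinant $2\zeta(\mu^2+\nu^2)$.
\begin{itemize}
\item If $\zeta\neq0$, this forces $u_0=v_0=h=0$.
\item If $\zeta=0$, the same equations give $u_0=v_0=0$ directly, and the displayed pair reduces to $\mu h=\nu h=0$.
\item In both subcases the degree-one relations then give $u_2=v_2=0$, so the form vanishes.
\end{itemize}

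\emph{Case $\mu=\nu=0$, $\zeta\neq0$.} First $h=0$. Substituting the degree-one relations into the displayed pair gives
\[
 (\zeta^2+\delta-\acc^2)\,v_0+2\acc\zeta\,u_0=0,
 \qquad
 2\acc\zeta\,v_0-(\zeta^2+\delta-\acc^2)\,u_0=0,
\]
with determinant $4\acc^2\zeta^2+(\zeta^2+\delta-\acc^2)^2$.

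This determinant is strictly positive for real $\zeta\neq0$ and $\acc>0$. That positivity is the real content of the lemma, and it is exactly where the hypothesis that $U,u,v$ are real enters. Your outline never invokes reality at this step, and the conclusion genuinely fails over $\mathbb C$. For example, take $\acc=1$, $\zeta=\ii$, $\delta=0$, $\mu=\nu=0$, with $u=1$ and $v=\ii$. Then $(\ii\,\dd p+\dd q)/U$ is a nonzero closed form.
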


\begin{proof}
Because $\acc>0$, we begin by rescaling $p$ and $q$ via $\bar p=\sqrt{\acc}\,p$ and $\bar q=\sqrt{\acc}\,q$. With the induced redefinition of the coefficients, the algebraic classification can be reduced to the representative case $\acc=1$, after which the original scaling may be reinstated at the end.

For the one-form \eqref{eq:CStackel}, the condition of being exact is $\partial_p(u/U)=\partial_q(v/U)$.  Clearing the denominator by multiplying through by $U^2$ yields the polynomial identity
\[
     u'U-uU_{,p}=v'U-vU_{,q},
\]
where $u'=\dd u/\dd p$ and $v'=\dd v/\dd q$.  Since this equality is required to hold throughout an open set in the $(p,q)$ plane, the coefficients of each linearly independent monomial must match.  In particular, this implies $u_1=v_1=:h$, and furthermore
{\footnotesize
\[
\begin{aligned}
     &\nu u_2=\mu v_2,
     &&2\zeta h+\mu v_0=0,
     &&2\zeta h-\nu u_0=0,
     &&\mu u_0=\nu v_0,\\
     &u_2=v_0-\zeta u_0,
     &&v_2=u_0+\zeta v_0,
\end{aligned}
\]
}
together with the remaining two conditions coming from the highest-order coefficients.

These relations can be analyzed case by case. If $(\mu,\nu)\neq(0,0)$ and $\zeta\neq0$, then the subsystem in $(u_0,v_0,h)$ has determinant $2\zeta(\mu^2+\nu^2)\neq0$, so it follows that $u_0=v_0=h=0$; substituting this back into the remaining equations then yields $u_2=v_2=0$. If instead $\zeta=0$ with $(\mu,\nu)\neq(0,0)$, the same result is obtained from the conditions $\mu v_0=0$, $\nu u_0=0$, and $\mu u_0=\nu v_0$. Therefore, an exact 1-form can be nonzero only if \( \mu=\nu=0\).

Assuming $\mu=\nu=0$, one must also rule out $\zeta\neq0$. In fact, the remaining coefficient relations give
\[
     \left[4\zeta^2+(\zeta^2+\delta-1)^2\right]
     (u_0^2+v_0^2)=0.
\]
For real $\zeta\neq0$, the bracketed factor is strictly positive, which forces $u_0=v_0=0$, and then the remaining coefficient equations again imply that the entire 1-fomr is zero. Hence, a nonzero exact 1-fomr additionally requires $\zeta=0$.

Thus, in the normalized variables the only remaining conformal factor is $U=1-2pq+\delta p^2q^2$. Exactness then imposes $u_1=v_1=h$, $u_2=v_0$, and $v_2=u_0$, along with the constraints $(\delta-1)u_0=(\delta-1)v_0=0$. When $\delta\neq1$, this forces $u=hp$ and $v=hq$ as the sole possibility, while for $\delta=1$ the constants $u_0$ and $v_0$ are still undetermined. Reintroducing the original scale $\acc$ yields the two branches stated earlier.
\end{proof}

This lemma is entirely kinematic: it does not rely on any Einstein or Maxwell equations. Exactness by itself compels the quadratic St\"ackel 1-form to remove the $\mu$, $\nu$, and $\zeta$ deformations, leaving only a conformal factor that depends solely on the product $pq$. We may now apply the Einstein reconstruction and check whether the remaining scalar 1-form is compatible with a non-aligned Maxwell field.

\begin{theorem}[Polynomial dilatonic rigidity]\label{thm:polynomialRigidity}
Assume the physical region \eqref{eq:physicalpatch}, the polynomial conformal factor \eqref{eq:Uansatz}, and a nonzero exact scalar 1-form of the form \eqref{eq:CStackel}.  Then the Maxwell field must be aligned, \( B_-=0\).

More precisely, the only nonzero scalar branch admitted by the Einstein reconstruction equations is
{\footnotesize
\begin{equation}
     U=1-2\acc pq+\delta p^2q^2,
     \qquad
     \frac{C}{\alphao}
     =
     h\,\frac{\dd(pq)}{U},
     \qquad
     h^2=\delta-\acc^2>0.
     \label{eq:hdelta}
\end{equation}
}
The limiting case $\delta=\acc^2$ reduces to the bilinear PD factor and is incompatible with a nonzero scalar 1-form.
\end{theorem}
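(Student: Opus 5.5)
The plan is to combine the kinematic Lemma~\ref{lem:exactnessClassification} with the two diagonal reconstruction equations \eqref{eq:OmegaqqEMD}--\eqref{eq:OmegappEMD}, taking $\epss=+1$ and $\Om=\sqrt U>0$.

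\textbf{Step 1: reduce to two branches.} Since $C$ is exact and nonzero, the lemma already forces $\mu=\nu=\zeta=0$. Hence $U=1-2\acc pq+\delta p^2q^2$, which depends only on $s:=pq$. This leaves two branches:
\begin{itemize}
\item[(i)] $\delta\neq\acc^2$, with $C/\alphao=h\,\dd(pq)/U$ and $h\neq0$;
\item[(ii)] $\delta=\acc^2$, with the larger exact family.
\end{itemize}

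\textbf{Step 2: dispose of branch (ii).} Here $U=(1-\acc pq)^2$, so $\Om=1-\acc pq$ on the region where $\Om>0$. This is exactly the bilinear PD factor. Theorem~\ref{thm:bilinearObstruction} then gives $C=0$, which contradicts the hypothesis that $C$ is nonzero. So the limiting case carries no nonzero scalar 1-form.

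\textbf{Step 3: compute the left-hand sides in branch (i).} Write $U(s)$ with $U'=\dd U/\dd s=2(\delta s-\acc)$. Then $U_{,q}=pU'$ and $U_{,qq}=2\delta p^2$. From
\[
\frac{\Om_{,qq}}{\Om}=\frac{U_{,qq}}{2U}-\frac{U_{,q}^2}{4U^2}
\]
I expect the cancellation
\[
\frac{\Om_{,qq}}{\Om}=\frac{p^2\bigl[\delta U-(\delta s-\acc)^2\bigr]}{U^2}=\frac{p^2(\delta-\acc^2)}{U^2},
\]
and symmetrically $\Om_{,pp}/\Om=q^2(\delta-\acc^2)/U^2$. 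The scalar components are $C_q^2/\alphao^2=h^2p^2/U^2$ and $C_p^2/\alphao^2=h^2q^2/U^2$.

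\textbf{Step 4: apply the diagonal equations.} With $K:=\delta-\acc^2-h^2$, the two diagonal equations become
\[
\frac{p^2K}{U^2}=\frac{\PP}{\RR}|B_-|^2,\qquad \frac{q^2K}{U^2}=-\frac{\QQ}{\RR}|B_-|^2 .
\]
In the physical region \eqref{eq:physicalpatch} the first right-hand side is $\geq0$ and the second is $\leq0$. On an open set where $pq\neq0$, this forces $K\geq0$ and $K\leq0$, hence $K=0$. It then follows that $|B_-|^2=0$, i.e.\ the Maxwell field is aligned. Since $h\neq0$, we obtain $h^2=\delta-\acc^2>0$, which is \eqref{eq:hdelta}. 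If $p$ or $q$ vanishes on a curve, continuity extends the conclusion across it.

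\textbf{Step 5: the mixed equation (expected main obstacle).} The remaining task is to check that the mixed equation \eqref{eq:OmegamixedEMD} is compatible with this branch and imposes no further restriction. With $B_-=0$ it reads
\[
\ii\Sig\Om_{,pq}-(q+\ii p)(\Om_{,q}+\ii\Om_{,p})=\frac{\ii\Om\Sig}{\alphao^2}C_pC_q .
\]
I would insert $\Om=\sqrt U$ and $C_pC_q/\alphao^2=h^2pq/U^2$. The real and imaginary parts then give polynomial identities in $(p,q)$, to be checked coefficient by coefficient.

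If these identities fail, the mixed equation would eliminate the branch at the reconstruction level. The stated theorem only claims that \eqref{eq:hdelta} is the sole candidate branch, so the logic of Steps~1--4 suffices for the claim as written. Still, I would verify the mixed equation explicitly so as to record precisely which further conditions it imposes.
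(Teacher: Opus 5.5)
Your proposal is correct and follows the paper's proof essentially step for step. The exactness lemma reduces everything to $U=U(pq)$. The two diagonal reconstruction equations, whose right-hand sides have opposite signs, then force $B_-=0$ and $h^2=\delta-\acc^2$ via $\Om_{,qq}/\Om=p^2(\delta-\acc^2)/U^2$, and the bilinear theorem disposes of $\delta=\acc^2$.

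Your Step 5 is not needed, since the paper does not use the mixed equation here, and it imposes nothing new anyway. For $\Om=\Om(s)$ one has $(q+\ii p)(\Om_{,q}+\ii\Om_{,p})=\ii\Sig\,\Om'$ and $\Om_{,pq}=\Om'+s\Om''$. The mixed equation therefore collapses to $s\,(\Om''-h^2\Om/U^2)=0$, which is the same condition the diagonal equations already give.
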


\begin{proof}
By \cref{lem:exactnessClassification}, it suffices to treat the case \(U=1-2\acc pq+\delta p^2q^2\). Assume first that $\delta\neq\acc^2$ and set the single variable $s:=pq$.  Then
\[
     \Om=\sqrt{1-2\acc s+\delta s^2}, \quad  \frac{C}{\alphao}=h\,\frac{\dd s}{U},
\]
so the conformal factor and the scalar field are functions of $s$ alone.  Using $\partial_q s=p$ and $\partial_p s=q$, the two diagonal reconstruction conditions become
{\footnotesize
\[
     p^2\left(
     \frac{\Om''}{\Om}-\frac{h^2}{U^2}
     \right)
     =
     \frac{\PP}{\RR}|B_-|^2,
     \qquad
     q^2\left(
     \frac{\Om''}{\Om}-\frac{h^2}{U^2}
     \right)
     =
     -\frac{\QQ}{\RR}|B_-|^2,
\]
}
where primes indicate differentiation with respect to $s$.

The left-hand sides coincide up to the positive factors $p^2$ and $q^2$, while in the physical Carter region the electromagnetic contributions on the rigth-hand sides carry opposite signs.  Hence, on the open set where $pq\neq0$ (so $p^2,q^2>0$), compatibility force
\[
     B_-=0,
     \qquad
     \frac{\Om''}{\Om}=\frac{h^2}{U^2}.
\]
Continuity then propagates the alignment condition across the entire connected region.

For $\Om=\sqrt{1-2\acc s+\delta s^2}$ one computes directly that
\[
     \frac{\Om''}{\Om}
     =
     \frac{\delta-\acc^2}{U^2}.
\]
Comparing with the previous identity gives $h^2=\delta-\acc^2$.  Since the scalar 1-form is taken to be real and nonvanishing, we require $h^2>0$, and therefore $\delta>\acc^2$.

If instead $\delta=\acc^2$, then $U=(1-\acc pq)^2$.  Because $\Om>0$ in the physical region, this implies $\Om=1-\acc pq$, i.e. the standard bilinear PD factor.  The earlier theorem would then impose $C=0$, contradicting the assumption that the scalar 1-form is nonzero.  Thus the borderline case cannot occur.
\end{proof}



\subsubsection{Full closure of the aligned branch}

The preceding theorem exhausts the Einstein reconstruction within \eqref{eq:Uansatz}-\eqref{eq:CStackel}. It remains to test the unique aligned candidate \eqref{eq:hdelta} against the potential equations.

\begin{theorem}[Quadratic St\"ackel no-go]
\label{thm:quadraticStackelNoGo}
Let $\alphao\neq0$ and $\acc>0$. In a physical Carter region, no smooth structure functions $\PP(p)$ and $\QQ(q)$ can make the ans\"atze \eqref{eq:Uansatz} and \eqref{eq:CStackel}, with a nonzero exact scalar 1-form, solve the full EMD system.
\end{theorem}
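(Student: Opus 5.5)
By \cref{lem:exactnessClassification} and \cref{thm:polynomialRigidity}, any candidate lies on the single aligned branch \eqref{eq:hdelta}. There $U=1-2\acc s+\delta s^2$ with $s=pq$, $C/\alphao=h\,\dd s/U$, $h^2=\delta-\acc^2>0$, and $B_-=0$. The plan is to feed this branch into the two remaining pieces of information that have not yet been used. The first is the mixed reconstruction equation \eqref{eq:OmegamixedEMD}. The second is the potential equations \eqref{eq:Aeq}--\eqref{eq:Ceq}, above all the scalar equation \eqref{eq:Ceq}. The goal is to show that these force $h=0$, whatever the smooth functions $\PP,\QQ$.

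\textbf{Step 1: the mixed equation.} With $B_-=0$, the electromagnetic term in \eqref{eq:OmegamixedEMD} drops out. We insert $\Om=\sqrt{U(s)}$, using $\Om_{,p}=q\Om'$, $\Om_{,q}=p\Om'$ and $\Om_{,pq}=\Om'+pq\Om''$, together with $C_pC_q=\alphao^2h^2pq/U^2$. This yields a relation containing no $\PP$ or $\QQ$. Its imaginary part is
\[
\Sig(\Om'+s\Om'')-2pq\Om'=\Sig\,s\,h^2\Om/U^2 .
\]
Its real part comes from $-(q+\ii p)(p+\ii q)\Om'$, which equals $-\ii\Sig\Om'$. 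We therefore expect the real part to vanish identically and the imaginary part to give
\[
\Sig(\Om'+s\Om'')-2s\Om'=\Sig s h^2\Om/U^2 .
\]
On the branch, $\Om''/\Om=h^2/U^2$ already holds, so what remains is $\Om'(\Sig-2pq)=0$, that is, $\Om'(p-q)^2=0$. Since $\Om'=(\delta s-\acc)/\Om$ is not identically zero, this is a contradiction on any open set. I would first verify this sign bookkeeping carefully against \cref{app:OmegaDerivation}. If the mixed equation turns out instead to be satisfied identically, I would move on to the potential equations in Steps 2 and 3.

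\textbf{Step 2: the scalar and Maxwell equations.} Alignment gives $B_p=\ii B_q$. Write $B_q=b$, so that $B\cdot B=(\QQ-\PP)b^2/\lambda=\RR b^2/\lambda$. Using the divergence formula in principal coordinates, $\frac1\rho D(\rho C)=\frac{\Om^2}{\lambda}\bigl[\partial_p(\PP C_p/\Om^2)+\partial_q(\QQ C_q/\Om^2)\bigr]$, equation \eqref{eq:Ceq} becomes
\[
\alphao h\bigl[\partial_p(\PP q/U^2)+\partial_q(\QQ p/U^2)\bigr]U=\alphao^2\RR\,\Rea(b^2).
\]
The $B$ equation with $C\neq0$ fixes $\partial\ln b$ up to the extra term $C\cdot\bar B/B$, so $b$ is determined by $\PP,\QQ,U$. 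The $A$ equation then closes the system.

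\textbf{Step 3: separation argument (main obstacle).} Applying $\partial_p\partial_q$-type operators to the equations above and using that $\PP$ depends only on $p$ and $\QQ$ only on $q$, we separate variables. The explicit $U(pq)$ dependence should force $h=0$, independently of the form of $\PP$ and $\QQ$. The hard part is carrying this separation out for arbitrary smooth $\PP,\QQ$ rather than polynomial ones. If Step 1 already yields the contradiction, Steps 2 and 3 are unnecessary.
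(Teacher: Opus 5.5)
Your Step 1 does not yield a contradiction, because of a bookkeeping slip. With $\Om=\Om(s)$ you correctly note $-(q+\ii p)(\Om_{,q}+\ii\Om_{,p})=-(q+\ii p)(p+\ii q)\Om'=-\ii\Sig\Om'$. This term is purely imaginary and contributes $-\Sig\Om'$, not $-2s\Om'$, to the imaginary part. It therefore cancels the $\ii\Sig\Om'$ inside $\ii\Sig\Om_{,pq}=\ii\Sig(\Om'+s\Om'')$. With $B_-=0$ and $C_pC_q=\alphao^2h^2s/U^2$, \eqref{eq:OmegamixedEMD} collapses to $\Sig s\,(\Om''-h^2\Om/U^2)=0$. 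This holds identically, because \cref{thm:polynomialRigidity} already imposed $\Om''/\Om=h^2/U^2$. The Einstein reconstruction is thus exhausted on the branch \eqref{eq:hdelta}, so the whole proof must come from the potential equations. There your Steps 2--3 are a hope rather than an argument.

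The missing idea is in the Maxwell sector. Equation \eqref{eq:Beq} is a single complex divergence equation. To control both derivatives of $b$ you must pair it with the curl identity $\dd B=-\Rea(A)\wedge B-C\wedge\bars B$, which follows from the definitions \eqref{eq:ABC} alone. Together with $A=A_{\rm C}-\dd\ln\Om$ this gives $\partial_p b=[-\Rea A_p-2\ii/(q+\ii p)]b-C_p\bars b$ and $\partial_q b=[-\Rea A_q-2/(q+\ii p)]b+C_q\bars b$.

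This system is overdetermined and couples $b$ to $\bars b$, so $b$ is not simply ``determined by $\PP,\QQ,U$''. Equating mixed partials (using $\dd\Rea A=0=\dd C$) gives the algebraic condition $[-2\partial_qC_p+4\ii(pC_p-qC_q)/\Sig]\,\bars b=0$. On the branch $pC_p=qC_q$, while $\partial_qC_p=\alphao h(1-\delta p^2q^2)/U^2\neq0$, so $B=0$.

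Only then does \eqref{eq:Ceq} become the linear equation $U(q\PP'+p\QQ')-4(\delta pq-\acc)(q^2\PP+p^2\QQ)=0$. Separate it in $x=\ln|p|$, $y=\ln|q|$ via $H=\PP/p^2+\QQ/q^2$, which satisfies $H_{,xy}=0$. Together with $\acc>0$ this forces $\PP=cp^2$, $\QQ=-cq^2$. This pair solves the scalar equation identically, so your expectation that separation forces $h=0$ ``independently of the form of $\PP$ and $\QQ$'' is mistaken. The final contradiction comes from the physical-region signs $\PP>0$, $\QQ>0$, which your plan never uses.
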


\begin{proof}
By \cref{lem:exactnessClassification,thm:polynomialRigidity}, any such solution must lie on the branch \eqref{eq:hdelta} and satisfy $B_-=0$. Let $B_p=\ii b$ and $B_q=b$. The definitions \eqref{eq:kappaC} and \eqref{eq:ABC} imply identically $\dd B=-\Rea(A)\wedge B-C\wedge\bars B$. Combining this identity with
\eqref{eq:Beq} and \eqref{eq:AConformalSplit} gives
\[
\begin{aligned}
     \partial_p b&=
     \left[-\Rea(A_p)-\frac{2\ii}{q+\ii p}\right]b-C_p\bars b,\\
     \partial_q b&=
     \left[-\Rea(A_q)-\frac{2}{q+\ii p}\right]b+C_q\bars b .
\end{aligned}
\]
Since $\Rea(A)=\tfrac12\dd\ln f$ and $\dd C=0$, commutation of the two derivatives yields
\[
     \left[
     -2\partial_qC_p+\frac{4\ii}{\Sig}(pC_p-qC_q)
     \right]\bars b=0.
\]
For \eqref{eq:hdelta}, $C_p=\alphao hq/U$ and $C_q=\alphao hp/U$.
Thus $pC_p-qC_q=0$, whereas $\partial_qC_p=\alphao h(1-\delta p^2q^2)/U^2$. Since $h\neq0$, this coefficient cannot vanish on a two-dimensional open set. Hence $b=0$ off a set with empty interior, and smoothness gives $B=0$ throughout the patch.

With $B=0$, \eqref{eq:Ceq} reduces to
\[
     U(q\PP'+p\QQ')
     -4(\delta pq-\acc)(q^2\PP+p^2\QQ)=0.
\]
On a subpatch with $pq\neq0$, set $x=\ln|p|$, $y=\ln|q|$, $r=x+y$, $w=x-y$, and $H=\PP(p)/p^2+\QQ(q)/q^2$. The preceding equation becomes $UH_{,r}+(1-\delta s^2)H=0$, with $s=pq$, and therefore $H=(U/s)F(w)$. Since $H_{,xy}=0$ by construction,
\[
 0=H_{,xy}=\frac{U}{s}(F-F'')+2\acc F .
\]
Because $\acc>0$ and $U/s=s^{-1}-2\acc+\delta s$ is nonconstant, $F=0$. Consequently, $\PP=cp^2$ and $\QQ=-cq^2$, contradicting $\PP>0$ and $\QQ>0$. Hence no nonzero scalar branch solves the full potential system.
\end{proof}

The theorem does not assume that $\PP$ or $\QQ$ are polynomial, however, it is a local result and relies on $\epss=+1$, $\acc>0$, the polynomial conformal factor \eqref{eq:Uansatz}, and the precise quadratic St\"ackel structure \eqref{eq:CStackel}. As a consequence, phantom scalars, conformal factors of higher degree, and non-St\"ackel currents are not covered by this theorem.

\subsection{Known dilatonic boundary and scope of the rigidity results}
\label{sec:boundaries}

The results quoted above are conditional rigidity statements and should not be taken to mean that dilatonic conformal Carter geometries are universally ruled out. A helpful boundary-case example is already available in the uncharged sector $B=0$, where one local branch is given by
\[
     \PP=a_0+a_2p^2+\beta b_0p^4,
     \qquad
     \QQ=b_0-a_2q^2+\beta a_0q^4,
\]
together with
\[
 \Om=\frac{1+\beta p^2q^2}{\sqrt{pq}},
 \qquad
 \phi=\phi_0\mp\frac{\sqrt3}{2}\ln(pq),
 \qquad pq>0.
\]
The corresponding dilatonic 1-form is
\[
 C
 =
 \mp\frac{\sqrt3\alphao}{2}
 \left(
 \frac{\dd p}{p}
 +
 \frac{\dd q}{q}
 \right).
\]
Following a coordinate inversion, this corresponds to the $\nu=2$ zero-potential, minimally coupled scalar subbranch examined by Anabal\'on \cite{Anabalon:2012ta}. We present it as a boundary check for our classification, not as a novel solution.

Its significance is especially evident because its conformal factor falls outside both of the categories discussed above: it is neither the bilinear PD factor nor an element of the polynomial class \eqref{eq:Uansatz}. Hence it evades any conflict with either rigidity theorem. The first theorem rules out a nonconstant ordinary dilaton when the conformal factor is fixed precisely to $\Om=1-\acc pq$; the second establishes that, in the quadratic St\"ackel closure, any remaining nonzero scalar 1-form necessarily enforces electromagnetic alignment. Dilatonic Carter configurations with an alternative functional form of $\Om$ therefore lie beyond the reach of these results.

\section{Conclusions}

Within a generalized Ernst potential-space framework, we have assembled a complete local dictionary for the \(\Lambda=0\) Plebański-Demiański electrovacuum. This dictionary provides the Weyl coordinates and metric functions, the electric, magnetic, and twist potentials, the target-space 1-forms \(A\), \(B\), and \(C\), and the Hamiltonian tensor used to recover the remaining Weyl function \(k\).

Through the principal chiral decomposition, the dictionary acquires a clear geometric meaning. One chirality of \(B\) quantifies the obstruction to double electromagnetic alignment, whereas writing \(A=A_{\rm C}-\dd\ln\Om\) cleanly splits the Carter seed from the conformal deformation. For the normalized PD factor, the opposite chirality of \(A\) is exactly zero in the nonaccelerating limit. In addition, the trace-free Hamiltonian tensor captures the relative conical information associated with disconnected portions of the symmetry axis.

For a standard scalar field in a physical Carter region, the bilinear term \(\Om=1-\acc pq\) enforces \(C=0\) and \(B_-=0\) without needing to assume polynomial Carter functions or to impose alignment in advance. With a fixed nonzero dilatonic coupling, the remaining scalar equation further rules out any nonzero PD Coulomb-type current on a nondegenerate cohomogeneity-two patch.

A quadratic Stäckel analysis sharpens these results. Imposing exactness restricts the polynomial conformal factor to \(U=1-2\acc pq+\delta p^2q^2\), and the Einstein reconstruction leaves only the aligned option \(C/\alphao=h\,\dd(pq)/U\), where \(h^2=\delta-\acc^2>0\). Consistency with the \(B\)-field equation then requires its Maxwell-type contribution to vanish, and the scalar equation subsequently yields \(\PP=cp^2\) and \(\QQ=-cq^2\), which conflicts with the sign requirements defining a physical Carter region. Therefore, within this quadratic Stäckel class, there is no local nonconstant ordinary-dilaton solution, regardless of any polynomial assumption about \(\PP\) and \(\QQ\).

This result is local and conditional—a no-go statement rather than a global no-hair theorem. Cases involving phantom scalars, Carter regions with sign changes, scalar potentials, conformal factors of higher degree, or genuinely non-Stäckel scalar currents are not covered.

\section{Acknowledgements}

LB thanks SECIHTI-M\'exico for the doctoral grant.
This work was also partially supported by SECIHTI M\'exico under grants SECIHTI CBF-2025-G-1720 and CBF-2025-G-176.

\section*{Data Availability Statement}
This article presents entirely theoretical research. No datasets were generated or analyzed during the 1-form study.

\begin{appendices}

\section{Component proof of the NP--potential dictionary}
\label{app:NPproof}

The EMD magnetic-potential definition gives
\begin{align}
     A_{\varphi,p}+\omega A_{t,p}
     &=\frac{\Sig\QQ}{2\kappa^2\RR}\chi_{,q},
     \label{eq:Avarphip}\\
     A_{\varphi,q}+\omega A_{t,q}
     &=-\frac{\Sig\PP}{2\kappa^2\RR}\chi_{,p}.
     \label{eq:Avarphiq}
\end{align}
Resolving $F=\dd\mathcal A$ in the coframe \eqref{eq:coframe} yields
\begin{align}
     F_{01}&=-\frac{\Om^2}{2\RR}
     \left(\PP\kappa^{-2}\chi_{,p}+\QQ\psi_{,q}\right),
     \label{eq:F01}\\
     F_{02}&=\frac{\Om^2\sqrt{\PP\QQ}}{2\RR}
     \left(\kappa^{-2}\chi_{,q}-\psi_{,p}\right),
     \label{eq:F02}\\
     F_{13}&=-\frac{\Om^2\sqrt{\PP\QQ}}{2\RR}
     \left(\kappa^{-2}\chi_{,p}+\psi_{,q}\right),
     \label{eq:F13}\\
     F_{23}&=\frac{\Om^2}{2\RR}
     \left(\QQ\kappa^{-2}\chi_{,q}-\PP\psi_{,p}\right).
     \label{eq:F23}
\end{align}
The circularity components $F_{03}$ and $F_{12}$ vanish.  Hence
{\footnotesize
\begin{align}
     \Phi_0=\Phi_2
     &=-\frac{\Om^2\sqrt{\PP\QQ}}{4\RR}
     \left[
     \kappa^{-2}\chi_{,p}+\psi_{,q}
     +\ii(\psi_{,p}-\kappa^{-2}\chi_{,q})
     \right],
     \label{eq:Phi0pot}\\
     \Phi_1
     &=\frac{\Om^2}{4\RR}
     \left[
     \PP\kappa^{-2}\chi_{,p}+\QQ\psi_{,q}
     +\ii(\PP\psi_{,p}-\QQ\kappa^{-2}\chi_{,q})
     \right].
     \label{eq:Phi1pot}
\end{align}
}
On the other hand, \eqref{eq:ABC} gives
{\footnotesize
\begin{align}
     \frac{\ii}{\kappa}(B_p-\ii B_q)
     &=-\frac1{2\sqrt f}
     \left[
     \kappa^{-2}\chi_{,p}+\psi_{,q}
     +\ii(\psi_{,p}-\kappa^{-2}\chi_{,q})
     \right],
     \label{eq:Bminuspot}\\
     \frac1\kappa(\QQ B_q+\ii\PP B_p)
     &=-\frac1{2\sqrt f}
     \left[
     \QQ\psi_{,q}+\PP\kappa^{-2}\chi_{,p}
     +\ii(\PP\psi_{,p}-\QQ\kappa^{-2}\chi_{,q})
     \right].
     \label{eq:Bpluspot}
\end{align}
}
Equations \eqref{eq:Phi0pot}-\eqref{eq:Bpluspot} prove \eqref{eq:Phi0B}-\eqref{eq:Phi1B} without using any field equation.

\section{Derivation of the EMD conformal-factor equations}
\label{app:OmegaDerivation}

In the principal tetrad, the three geometric Ricci-spinor projections relevant for the conformal factor can be normalized as
\begin{align}
     \mathfrak R_{00}
     &=\frac{\QQ\Om}{2\Sig}\Om_{,qq},
     \label{eq:R00geom}\\
     \mathfrak R_{02}
     &=-\frac{\PP\Om}{2\Sig}\Om_{,pp},
     \label{eq:R02geom}\\
     \mathfrak R_{01}
     &=\frac{\sqrt{\PP\QQ}\Om}{2\Sig^2}
     \left[
     \ii\Sig\Om_{,pq}-(q+\ii p)(\Om_{,q}+\ii\Om_{,p})
     \right].
     \label{eq:R01geom}
\end{align}
For the Maxwell part, the convention is $\mathfrak R_{ij}=2\kappa^2\Phi_i\bars\Phi_j$.  The scalar is stationary and axisymmetric, so its tetrad derivatives contain only $C_p/\alphao$ and $C_q/\alphao$.  Adding $2\epss\phi_{,\mu}\phi_{,\nu}$ to the three projections gives
\begin{align}
     \mathfrak R_{00}
     &=2\kappa^2\Phi_0\bars\Phi_0
     +\frac{\epss\Om^2\QQ}{2\alphao^2\Sig}C_q^2,
     \label{eq:R00matter}\\
     \mathfrak R_{02}
     &=2\kappa^2\Phi_0\bars\Phi_2
     -\frac{\epss\Om^2\PP}{2\alphao^2\Sig}C_p^2,
     \label{eq:R02matter}\\
     \mathfrak R_{01}
     &=2\kappa^2\Phi_0\bars\Phi_1
     +\frac{\ii\epss\Om^2\sqrt{\PP\QQ}}{2\alphao^2\Sig}C_pC_q.
     \label{eq:R01matter}
\end{align}
Substituting \eqref{eq:Phi0B}-\eqref{eq:Phi1B}, using $f=\RR/(\Om^2\Sig)$, and comparing \eqref{eq:R00geom}-\eqref{eq:R01geom} with \eqref{eq:R00matter}-\eqref{eq:R01matter} gives \eqref{eq:OmegaqqEMD}--\eqref{eq:OmegamixedEMD}.  This also displays the cancellation of $\kappa$ in the Einstein equations.

\section{Foundational calculus, Routh reduction, and the roots of the Hamiltonian framework}
\label{app:formalism}

This appendix lays out the procedural steps underlying \cref{sec:potential}. They are written out in full so that the sign choices in the PD dictionary can be verified without relying on conventions adopted from another paper.

\subsection{The operators on the Weyl base}

Given a function $u(\rho,z)$, set $Du=(u_{,\rho},u_{,z})$ and $\widetilde Du=(u_{,z},-u_{,\rho})$. Using the standard Euclidean inner product, one has $Du\cdot Dv=u_{,\rho}v_{,\rho}+u_{,z}v_{,z}$. A direct computation shows $\widetilde D^2u=-Du$, $\widetilde Du\cdot\widetilde Dv=Du\cdot Dv$, $Du\cdot\widetilde Dv=-Dv\cdot\widetilde Du$, and $Du\cdot\widetilde Du=0$. For a 1-form $X=X_\rho\dd\rho+X_z\dd z$,
{\footnotesize
\begin{equation}
     D(\rho X)=\partial_\rho(\rho X_\rho)+\partial_z(\rho X_z),
     \quad
     \frac1\rho D(\rho\dd u)=u_{,\rho\rho}+\rho^{-1}u_{,\rho}+u_{,zz}.
     \label{eq:weightedBaseOperators}
\end{equation}
}
With the Hodge convention adopted in the main text, $(\widetilde Du)^\flat=-\star_{\Base}\dd u$. Therefore, if $s$ is an axisymmetric harmonic function, meaning $\rho^{-1}D(\rho\dd s)=0$, then there exists a local dual potential $\widetilde s$ such that
\begin{equation}
 \dd\widetilde s=\rho\star_{\Base}\dd s,
 \qquad
 \dd^2\widetilde s=\dd(\rho\star_{\Base}\dd s)=0.
 \label{eq:harmonicDual}
\end{equation}
In particular, this yields the desired integrability result without the need to posit a conjugate coordinate a priori.

\subsection{From the four-dimensional action to the target metric}

Substituting \eqref{eq:WeylMetric} and $\mathcal A=A_t\dd t+A_\varphi\dd\varphi$ into \eqref{eq:action}, integrating out the Killing directions, discarding the total-derivative term in the gravitational part, and keeping the flat-base contractions introduced above.  Prior to defining the magnetic and twist potentials, the reduced Lagrangian density reads
\begin{align}
     \mathcal L_{\rm eff}={}&\frac{\rho}{2f^2}(Df)^2
     -\frac{f^2}{2\rho}(D\omega)^2
     +\frac{2\rho\epss}{\alphao^2\kappa^2}(D\kappa)^2\nonumber\\
     &+\frac{2f\kappa^2}{\rho}\left[
     (\omega D A_t+D A_\varphi)^2
     -\frac{\rho^2}{f^2}(D A_t)^2\right].
     \label{eq:LeffBeforeDualization}
\end{align}
Varying with respect to $A_\varphi$ and $\omega$ yields two conserved currents.  Their local first integrals coincide with the definitions of $\chi$ and $\epsilon$ given in \eqref{eq:potdefinitions}.  Carrying out the associated Routh transform, rather than simply substituting algebraically into \eqref{eq:LeffBeforeDualization}, one obtains
\begin{align}
     \mathcal L_{\rm pot}=&\rho \bigg[
     \frac{(Df)^2+(D\epsilon-\psi D\chi)^2}{2f^2}
     +\frac{2\epss}{\alphao^2\kappa^2}(D\kappa)^2 \nonumber
     \\
     & -\frac{\kappa^2}{2f}(D\psi)^2
     -\frac{1}{2f\kappa^2}(D\chi)^2 \bigg].
     \label{eq:LpotDerived}
\end{align}
This equals $\rho$ times the pullback of \eqref{eq:targetmetric}.  The corresponding Euler-Lagrange equations are \eqref{eq:harmonicMapEquation}, writing their five real components in the coframe \eqref{eq:ABC} reproduces \eqref{eq:Aeq}--\eqref{eq:Ceq}.  Notably, both the relative minus sign in the electromagnetic sector and the $(\kappa,\kappa^{-1})$ factors appearing in $B$ arise from the Routh transformation.

\subsection{Hamiltonian and its polarization}

Restrict the target map to a curve $Y^A(\iota)$ and put $L_{\Target}=\tfrac12G_{AB}\dot Y^A\dot Y^B$.  Explicitly,
\begin{equation}
     L_{\Target}=\frac{\dot f^2+(\dot\epsilon-\psi\dot\chi)^2}{4f^2}
     +\frac{\epss\dot\kappa^2}{\alphao^2\kappa^2}
     -\frac{\kappa^2\dot\psi^2}{4f}
     -\frac{\dot\chi^2}{4f\kappa^2}.
     \label{eq:targetCurveLagrangian}
\end{equation}
The five conjugate momenta are
{\footnotesize
\begin{align}
     p_f=\frac{\dot f}{2f^2},\quad
     p_\epsilon=\frac{\dot\epsilon-\psi\dot\chi}{2f^2},\quad
     p_\psi=-\frac{\kappa^2\dot\psi}{2f},\quad \nonumber
     \\
     p_\kappa=\frac{2\epss\dot\kappa}{\alphao^2\kappa^2},\quad
     p_\chi=-\psi p_\epsilon-\frac{\dot\chi}{2f\kappa^2}.
     \label{eq:targetMomenta}
\end{align}
}
The Legendre transform is therefore
\begin{equation}
 H_{\Target}=f^2(p_f^2+p_\epsilon^2)
 -\frac f{\kappa^2}p_\psi^2
 -f\kappa^2(p_\chi+\psi p_\epsilon)^2
 +\frac{\alphao^2\kappa^2}{4\epss}p_\kappa^2.
 \label{eq:targetPhaseHamiltonian}
\end{equation}
Let $a,b,c$ be the components of the forms in \eqref{eq:ABC} along the curve. Substituting \eqref{eq:targetMomenta} yields $H_{\Target}=a\bars a-b\bars b+\epss c^2/\alphao^2$.
Polarizing this quadratic Hamiltonian in two independent base directions gives the tensor \eqref{eq:HamiltonianTensor}, hence the name \emph{Hamiltonian tensor}. See Ref. \cite{Bixano:2026duf}.

Since the target metric is independent of $\epsilon$ and $\chi$, $p_\epsilon$ and $p_\chi$ are conserved along any target geodesic, write their values as $q_N$ and $p_N$. For a rank-one map, the momentum definitions give $\Theta=2f^2q_N\dd s$ and $\dd\chi=-2f\kappa^2(p_N+\psi q_N)\dd s$. The last line of \eqref{eq:basicDictionary} then integrates (without additional field equations) to
\begin{equation}
 \omega=\omega_0-2q_N\widetilde s,
 \qquad
 A_\varphi=A_{\varphi0}-\frac12\omega\psi+p_N\widetilde s.
 \label{eq:rankOneNoetherReconstruction}
\end{equation}
Differentiating and using \eqref{eq:harmonicDual} confirms both identities. These are the Noether charges associated with the cyclic target directions \cite{Bixano:2026duf}, for rank two they become conserved 1-forms.

If the full map has rank one, $Y^A=Y^A(s)$ with $s$ satisfying \eqref{eq:harmonicDual}, the field equations reduce to the affine geodesic equation and
\begin{equation}
 K_0=|a|^2-|b|^2+\frac{\epss}{\alphao^2}c^2
 =\frac12G_{AB}\frac{\dd Y^A}{\dd s}\frac{\dd Y^B}{\dd s}
 \label{eq:rankOneEnergyProof}
\end{equation}
is constant, since its derivative is the target velocity contracted with the geodesic equation.

For rank two, set $A_\zeta=A_\rho-\ii A_z$ and $\bars A_\zeta=\bars A_\rho-\ii\bars A_z$, and similarly for $B$ and $C$ (taking the bar before forming components). Expanding \eqref{eq:kHamiltonianConstraints} gives the Hopf-type quadrature
\begin{equation}
 k_{,\rho}-\ii k_{,z}=\rho\left(
 A_\zeta\bars A_\zeta-B_\zeta\bars B_\zeta
 +\frac{\epss}{\alphao^2}C_\zeta^2\right).
 \label{eq:HopfHamiltonian}
\end{equation}
Only for rank one does the parenthesis factor as $K_0(s_{,\rho}-\ii s_{,z})^2$. Since generic PD has rank two, replacing \eqref{eq:HamiltonianTensor} by a single constant Hamiltonian would discard information.

\end{appendices}

\clearpage
\bibliographystyle{elsarticle-harv} 
\bibliography{Bibliografia}

@article{Matos:2010pcd,
    author = "Matos, Tonatiuh",
    title = "{Class of Einstein-Maxwell Phantom Fields: Rotating and Magnetised Wormholes}",
    eprint = "0902.4439",
    archivePrefix = "arXiv",
    primaryClass = "gr-qc",
    reportNumber = "CIEA-09-GR5",
    doi = "10.1007/s10714-010-0976-6",
    journal = "Gen. Rel. Grav.",
    volume = "42",
    pages = "1969--1990",
    year = "2010"
}

@article{Matos:2000ai,
    author = "Matos, Tonatiuh and Nunez, Dario and Estevez, Gabino and Rios, Maribel",
    title = "{Rotating 5-D Kaluza-Klein space-times from invariant transformations}",
    eprint = "gr-qc/0001039",
    archivePrefix = "arXiv",
    reportNumber = "CINVESTAV-00-FIS-8",
    doi = "10.1023/A:1001982001694",
    journal = "Gen. Rel. Grav.",
    volume = "32",
    pages = "1499--1525",
    year = "2000"
}

@article{Matos:2000za,
    author = "Matos, Tonatiuh and Nunez, Dario and Rios, Maribel",
    title = "{Class of Einstein-Maxwell dilatons, an ansatz for new families of rotating solutions}",
    eprint = "gr-qc/0008068",
    archivePrefix = "arXiv",
    doi = "10.1088/0264-9381/17/18/323",
    journal = "Class. Quant. Grav.",
    volume = "17",
    pages = "3917--3934",
    year = "2000"
}

@article{Ernst:1967wx,
    author = "Ernst, Frederick J.",
    title = "{New formulation of the axially symmetric gravitational field problem}",
    doi = "10.1103/PhysRev.167.1175",
    journal = "Phys. Rev.",
    volume = "167",
    pages = "1175--1179",
    year = "1968"
}

@article{Ernst:1967by,
    author = "Ernst, Frederick J.",
    title = "{New Formulation of the Axially Symmetric Gravitational Field Problem. II}",
    doi = "10.1103/PhysRev.168.1415",
    journal = "Phys. Rev.",
    volume = "168",
    pages = "1415--1417",
    year = "1968"
}

@article{Plebanski:1976gy,
    author = "Plebanski, J. F. and Demianski, M.",
    title = "{Rotating, charged, and uniformly accelerating mass in general relativity}",
    doi = "10.1016/0003-4916(76)90240-2",
    journal = "Annals Phys.",
    volume = "98",
    pages = "98--127",
    year = "1976"
}

@article{Matos:1994hm,
    author = "Matos, Tonatiuh",
    title = "{5-D axisymmetric stationary solutions as harmonic maps}",
    eprint = "gr-qc/9401009",
    archivePrefix = "arXiv",
    doi = "10.1063/1.530590",
    journal = "J. Math. Phys.",
    volume = "35",
    pages = "1302--1321",
    year = "1994"
}

@article{Bixano:2026xum,
    author = "Bixano, Leonel and Matos, Tonatiuh",
    title = "{Generalized Ernst Potentials for arbitrary Dilatonic Theories}",
    eprint = "2603.02384",
    archivePrefix = "arXiv",
    primaryClass = "gr-qc",
    month = "3",
    year = "2026"
}

@article{Bixano:2026duf,
    author = "Bixano, Leonel and Matos, Tonatiuh",
    title = "{Potential Space Symmetries in Ernst-like Formulations of Einstein-Maxwell/ModMax-Scalar field Theories}",
    eprint = "2605.17843",
    archivePrefix = "arXiv",
    primaryClass = "gr-qc",
    month = "5",
    year = "2026"
}

@article{Bixano:2026uqk,
    author = "Bixano, Leonel and Matos, Tonatiuh",
    title = "{Exact rotating dilatonic branch in ModMax electrodynamics without Maxwell analogue}",
    eprint = "2604.13490",
    archivePrefix = "arXiv",
    primaryClass = "gr-qc",
    month = "4",
    year = "2026"
}

@article{Bixano:2026ouq,
  author = {Bixano, Leonel and Matos, Tonatiuh},
  title = {{Generalized Einstein-ModMax-ScalarField theories and new exact solutions}},
  eprint = {2603.26073},
  archiveprefix = {arXiv},
  primaryclass = {gr-qc},
  month = {3},
  year = {2026},
}

@article{Griffiths:2005qp,
    author = "Griffiths, J. B. and Podolsky, J.",
    title = "{A New look at the Plebanski-Demianski family of solutions}",
    eprint = "gr-qc/0511091",
    archivePrefix = "arXiv",
    doi = "10.1142/S0218271806007742",
    journal = "Int. J. Mod. Phys. D",
    volume = "15",
    pages = "335--370",
    year = "2006"
}

@article{Carter:1968ks,
    author = "Carter, B.",
    title = "{Hamilton-Jacobi and Schrodinger separable solutions of Einstein's equations}",
    doi = "10.1007/BF03399503",
    journal = "Commun. Math. Phys.",
    volume = "10",
    number = "4",
    pages = "280--310",
    year = "1968"
}

@article{Ovcharenko:2025cpm,
    author = "Ovcharenko, Hryhorii and Podolsk{\'y}, Ji{\v{r}}{\'\i}",
    title = "{New class of rotating charged black holes with nonaligned electromagnetic field}",
    eprint = "2508.04850",
    archivePrefix = "arXiv",
    primaryClass = "gr-qc",
    doi = "10.1103/8wkz-th6v",
    journal = "Phys. Rev. D",
    volume = "112",
    number = "6",
    pages = "064076",
    year = "2025"
}

@article{Kubiznak:2007kh,
    author = "Kubiznak, David and Krtous, Pavel",
    title = "{On conformal Killing-Yano tensors for Plebanski-Demianski family of solutions}",
    eprint = "0707.0409",
    archivePrefix = "arXiv",
    primaryClass = "gr-qc",
    doi = "10.1103/PhysRevD.76.084036",
    journal = "Phys. Rev. D",
    volume = "76",
    pages = "084036",
    year = "2007"
}

@article{Ovcharenko:2026pow,
    author = "Ovcharenko, Hryhorii and Podolsky, Jiri",
    title = "{Uniqueness of stationary axisymmetric type D black holes with non-aligned electromagnetic field}",
    eprint = "2604.13202",
    archivePrefix = "arXiv",
    primaryClass = "gr-qc",
    month = "4",
    year = "2026"
}

@article{Gray:2025lwy,
    author = "Gray, Finnian and Kubiznak, David and Ovcharenko, Hryhorii and Podolsky, Jiri",
    title = "{Hidden symmetries and separability structures of Ovcharenko-Podolsk{\'y} and conformal-to-Carter spacetimes}",
    eprint = "2511.21538",
    archivePrefix = "arXiv",
    primaryClass = "gr-qc",
    doi = "10.1103/8832-htpg",
    journal = "Phys. Rev. D",
    volume = "113",
    number = "4",
    pages = "044050",
    year = "2026"
}

@article{Matos:1994qv,
    author = "Matos, T. and Plebanski, J.",
    title = "{Axisymmetric stationary solutions as harmonic maps}",
    eprint = "gr-qc/9402044",
    archivePrefix = "arXiv",
    reportNumber = "CINVESTAV-12-93",
    doi = "10.1007/BF02108050",
    journal = "Gen. Rel. Grav.",
    volume = "26",
    pages = "477",
    year = "1994"
}

@article{Kocherlakota:2025cwq,
    author = "Kocherlakota, Prashant and Narayan, Ramesh",
    title = "{Doubly separable spacetimes and symmetry constraints on their self-gravitating matter content}",
    eprint = "2507.18706",
    archivePrefix = "arXiv",
    primaryClass = "gr-qc",
    doi = "10.1103/4wp1-2xny",
    journal = "Phys. Rev. D",
    volume = "112",
    number = "12",
    pages = "124058",
    year = "2025"
}

@article{Galtsov:2025nia,
    author = "Gal'tsov, Dmitri and Karsanov, Rostom",
    title = "{Gauged supergravities: Solutions with a Killing tensor}",
    eprint = "2503.06589",
    archivePrefix = "arXiv",
    primaryClass = "gr-qc",
    doi = "10.1103/PhysRevD.111.104011",
    journal = "Phys. Rev. D",
    volume = "111",
    number = "10",
    pages = "104011",
    year = "2025"
}

@article{Bicak:1999sa,
    author = "Bicak, J. and Pravda, Vojtech",
    title = "{Spinning C metric: Radiative space-time with accelerating, rotating black holes}",
    eprint = "gr-qc/9902075",
    archivePrefix = "arXiv",
    doi = "10.1103/PhysRevD.60.044004",
    journal = "Phys. Rev. D",
    volume = "60",
    pages = "044004",
    year = "1999"
}

@article{Astorino:2023elf,
    author = "Astorino, Marco and Boldi, Giovanni",
    title = "{Plebanski-Demianski goes NUTs (to remove the Misner string)}",
    eprint = "2305.03744",
    archivePrefix = "arXiv",
    primaryClass = "gr-qc",
    reportNumber = "IFUM-1105-FT, LIFT-3-1.23",
    doi = "10.1007/JHEP08(2023)085",
    journal = "JHEP",
    volume = "08",
    pages = "085",
    year = "2023"
}

@article{Anabalon:2012ta,
    author = "Anabalon, Andres",
    title = "{Exact Black Holes and Universality in the Backreaction of non-linear Sigma Models with a potential in (A)dS4}",
    eprint = "1204.2720",
    archivePrefix = "arXiv",
    primaryClass = "hep-th",
    doi = "10.1007/JHEP06(2012)127",
    journal = "JHEP",
    volume = "06",
    pages = "127",
    year = "2012"
}

@article{Nakajima:2026lkr,
    author = "Nakajima, Hiroaki and Guo, Ya and Lin, Wenbin",
    title = "{On uniqueness of non-vacuum stationary axisymmetric type D spacetime}",
    eprint = "2608.18653",
    archivePrefix = "arXiv",
    primaryClass = "gr-qc",
    month = "8",
    year = "2026"
}

@article{Furugori:2026qbb,
    author = "Furugori, Hideo and Tomizawa, Shinya",
    title = "{Generalized Black holes with Fully Non-aligned Electromagnetic Fields}",
    eprint = "2609.10332",
    archivePrefix = "arXiv",
    primaryClass = "gr-qc",
    reportNumber = "TTI-MATHPHYS-44",
    month = "9",
    year = "2026"
}

\end{document}